\newcommand{\D}[1]{\mathcal{D}{#1}\;}
\newcommand{\DU}[0]{\mathcal{D}{U}}
\newcommand{\DV}[0]{\mathcal{D}{V}}
\newcommand{\obs}[0]{\mathcal{O}}
\newcommand{\G}[0]{G}
\newcommand{\GG}[0]{\mathcal{G}}
\newcommand{\Seff}[0]{S_{\text{eff}}}
\newcommand{\ESS}[0]{\text{ESS}}
\newcommand{\SU}[1]{\textrm{SU}(#1)}
\newcommand{\SUn}{\SU{N}}
\newcommand{\U}[1]{\textrm{U}(#1)}
\newcommand{\Un}{\U{N}}
\newcommand{\modelparams}{\xi}
\newcommand{\boxflow}{\chi}
\newcommand{\LDJ}{\textrm{LDJ}}
\newcommand{\torusalg}{\mathfrak{t}}
\let\Re\undefined
\let\Im\undefined
\DeclareMathOperator{\Re}{Re}
\DeclareMathOperator{\Im}{Im}
\DeclareMathOperator{\DKL}{D_{\text{KL}}}
\DeclareMathOperator{\tr}{tr}
\DeclareMathOperator{\diag}{diag}
\newtheorem{prop}{Proposition}
\newtheorem{lemma}{Lemma}
\newtheorem{ex}{Example}
\newenvironment{proof}{\noindent\textbf{Proof\ }}{\hspace*{\fill}$\Box$\medskip}
\newtheorem{coro}{Corollary}
\begin{document}

\author{Denis~Boyda}
 \email{boyda@mit.edu}
\affiliation{Center for Theoretical Physics, Massachusetts Institute of Technology, Cambridge, MA 02139, USA}
\author{Gurtej~Kanwar}
 \email{gurtej@mit.edu}
\affiliation{Center for Theoretical Physics, Massachusetts Institute of Technology, Cambridge, MA 02139, USA}
\author{S\'{e}bastien~Racani\`{e}re}
 \email{sracaniere@google.com}
\affiliation{DeepMind, London, UK}
\author{Danilo~Jimenez~Rezende}
 \email{danilor@google.com}
\affiliation{DeepMind, London, UK}
\author{Michael~S.~Albergo}
\affiliation{Center for Cosmology and Particle Physics,
New York University, New York, NY 10003, USA}
\author{Kyle~Cranmer}
\affiliation{Center for Cosmology and Particle Physics,
New York University, New York, NY 10003, USA}
\author{Daniel~C.~Hackett}
\affiliation{Center for Theoretical Physics, Massachusetts Institute of Technology, Cambridge, MA 02139, USA}
\author{Phiala~E.~Shanahan}
\affiliation{Center for Theoretical Physics, Massachusetts Institute of Technology, Cambridge, MA 02139, USA}

\title{Sampling using $\SUn$ gauge equivariant flows}

\date{\today}
\preprint{MIT-CTP/5228}

\begin{abstract}
We develop a flow-based sampling algorithm for $\SUn$ lattice gauge theories that is gauge-invariant by construction. Our key contribution is constructing a class of flows on an $\SUn$ variable (or on a $\Un$ variable by a simple alternative) that respect matrix conjugation symmetry. We apply this technique to sample distributions of single $\SUn$ variables and to construct flow-based samplers for $\SU{2}$ and $\SU{3}$ lattice gauge theory in two dimensions.
\end{abstract}

\maketitle

\section{Introduction}

Gauge theories based on $\SUn$ or $\Un$ groups describe many aspects of nature. For example, the Standard Model of nuclear and particle physics is a non-abelian gauge theory with the symmetry group $\U{1} \times \SU{2} \times \SU{3}$, candidate theories for physics beyond the Standard Model can be defined based on strongly interacting $\SUn$ gauge theories~\cite{Kribs:2016cew,DeGrand:2015zxa}, $\SUn$ gauge symmetries emerge in various condensed matter systems~\cite{Ichinose:2014cba,Sachdev2018EmergentGauge,Chatterjee:2019xgs,Guo2020EmergentGauge,Bi:2018xvr}, and $\SUn$ and $\Un$ gauge symmetries feature in the low energy limit of certain string-theory vacua~\cite{Giveon:1998sr}. In the context of the rapidly-developing area of machine-learning applications to physics problems, the incorporation of gauge symmetries in machine learning architectures is thus of particular interest~\cite{cohen2019gauge,bekkers2019bspline,Kanwar:2020xzo,lezcanocasado2019cheap,lezcanocasado2019trivializations,pmlr-v89-falorsi19a}.

Here, we demonstrate how $\SUn$ gauge symmetries can be incorporated into \emph{flow-based models}~\cite{papamakarios2019normalizing}. These models use a parameterized invertible transformation (a ``flow'') to construct a variational ansatz for a target probability distribution that can be optimized via machine learning techniques to enable efficient sampling. We detail the application of this approach to lattice field theory calculations, for which such samplers have been found to offer potentially significant advantages over more traditional sampling algorithms~\cite{Albergo:2019eim,Kanwar:2020xzo,Nicoli:2020njz}.

A general approach to incorporating a symmetry in flow-based sampling models is to construct the models in terms of invertible transformations that are \emph{equivariant} to symmetry operations, meaning that the transformation and symmetry operations commute. For any gauge theory with a continuous gauge group, we showed in Ref.~\cite{Kanwar:2020xzo} that a gauge equivariant transformation that simultaneously remains equivariant under a large subgroup of spacetime translations can be constructed in terms of a \emph{kernel}: a transformation that acts on elements of the gauge group and is equivariant under matrix conjugation, $U \rightarrow X U X^{-1}$, where $U$ and $X$ are elements of the gauge group in the fundamental matrix representation. In Ref.~\cite{Kanwar:2020xzo}, this approach was demonstrated in the context of $\U{1}$ gauge theory. Here, we develop a class of kernels for $\SUn$ group elements (and describe a similar construction for $\Un$ group elements). We show that if an invertible transformation acts only on the eigenvalues of a matrix and is equivariant under permutation of those eigenvalues, then it is equivariant under matrix conjugation and may be used as a kernel. Moreover, by making a connection to the maximal torus within the group and to the Weyl group of the root system, we show that this is in fact a universal way to define a kernel for unitary groups.

The application of flow-based models to lattice field theory is reviewed briefly in Sec.~\ref{subsec:flow-based-review}. Methods to impose symmetries in these models are reviewed in Sec.~\ref{subsec:flow-based-symm}, and Sec.~\ref{subsec:gauge-equiv} describes our particular approach to imposing gauge symmetry in flow-based models using single-variable kernels. In Sec.~\ref{sec:single-sun-flows}, we construct kernels for $\SUn$ variables and investigate sampling from distributions over such variables, including the marginal distributions relevant for plaquettes in two-dimensional lattice gauge theory. Finally, in Sec.~\ref{sec:sun-gauge-theory} we use these kernels to construct gauge-symmetric flow-based samplers for $\SU{2}$ and $\SU{3}$ lattice gauge theory in two dimensions, and demonstrate that observables in these theories are exactly reproduced by the flow-based sampling approach.

\section{Flow-based sampling for lattice gauge theory} 

Lattice quantum field theory provides a non-perturbative regularization of the path integral by discretizing the theory onto a spacetime lattice. In Euclidean spacetime, the regularized expectation value of an observable $\obs$ is defined in terms of the discretized action $S(U)$ by
\begin{equation}
\begin{gathered}\label{eq:vacevalO}
  \braket{\obs} = \frac{1}{Z} \int \D{U} \obs(U) e^{-S(U)},
  \quad Z = \int \D{U} e^{-S(U)},
\end{gathered}
\end{equation}
where $\int \DU$ integrates over all degrees of freedom of the discretized quantum field $U$. We denote by $U_{\mu}(x) \in \G$ the element of $U$ on link $(x, x+\hat{\mu})$, where $\mu \in \{1, 2, \dots, D\}$ is the spacetime direction of the link, $x \in \mathbb{Z}^{D}$ indicates a site on the $D$-dimensional spacetime lattice, and $\G$ is the structure group of the gauge theory; for many relevant physical theories, the structure groups are Lie groups. The path integral measure $\DU$ for a lattice gauge theory is a product of the Haar measure of $\G$ per link.

Eq.~\eqref{eq:vacevalO} can be evaluated numerically by sampling {\it configurations} from the probability distribution $p(U) = e^{-S(U)} / Z$, which is typically undertaken using Markov chain methods~\cite{Morningstar:2007zm}. In Refs.~\cite{Albergo:2019eim,Kanwar:2020xzo}, we developed an approach to evaluate Eq.~\eqref{eq:vacevalO} for lattice field theories by sampling independent configurations from a flow-based model optimized to approximate $p(U)$, where unbiased estimates of observables can be obtained from this approximate distribution by either a reweighting technique or a Metropolis accept/reject step.\footnote{Sampling for lattice field theories based on generative adversarial networks has also been investigated in related work~\cite{Urban:2018tqv,Zhou:2018ill}.} Flow-based methods can similarly be applied to statistical theories (with continuous degrees of freedom) by replacing the field configurations $U$ of Eq.~\eqref{eq:vacevalO} with microstates, replacing the action with the Hamiltonian over temperature, $S \rightarrow H/k_B T$, and interpreting the distribution as the Boltzmann distribution~\cite{LiWang2018NNRG,ZhangEWang2018Monge,noe2019boltzmann,li2019neural,Hartnett:2020}.

\subsection{Sampling gauge configurations using flows} \label{subsec:flow-based-review}

A flow-based sampler consists of two components:
\begin{enumerate}
\item A \emph{prior distribution}\footnote{We specify the distribution using a density function $r(V)$. Here and in the following, this is implicitly a density with respect to the path integral measure $\DV$ (or $\DU$).} $r(V)$ that is easily sampled;
\item An invertible function, or \emph{flow}, $f$ that has a tractable Jacobian factor.
\end{enumerate}
Here, we restrict discussion to flow-based models targeting distributions $p(U)$ on Lie groups $\GG$, for which $U \in \GG$ and $f: \GG \rightarrow \GG$. The group could be a product of structure groups $\GG = \G \otimes \G \otimes \dots$, as in the case of lattice gauge theory, or an unfactorizable group such as $\SUn$ or $\Un$.
Generating a sample from the model proceeds by first sampling from the prior distribution $r(V)$, then applying $f$ to produce $U = f(V)$. In general, the invertible function $f$ stretches and concentrates the density of points over the domain, thus the output samples are distributed according to a new effective distribution $q(U)$. The output density can be explicitly computed in terms of the log-det-Jacobian of $f$, $\LDJ_f$,
\begin{equation} \label{eqn:change-of-variables}
  q(U) = \frac{r(V)}{e^{\LDJ_f(V)}},
  \quad e^{\LDJ_f(V)} := \left| \det_{ij} \frac{\partial \left[f(V)\right]_i}{\partial V_j} \right|.
\end{equation}
Here, the indices $i$ and $j$ run over directions in the Lie algebra of $\GG$ translated to $f(V)$ and $V$, respectively~\cite{wijsmanChap5}.

When $f$ is parameterized\footnote{The prior $r(V)$ may also be parameterized, though parameters controlling deterministic transformations of stochastic variables, as in $f$, have been shown to be easier to optimize~\cite{Rezende2014reparam,kingma2013reparam,Titsias2014reparam}.} by a collection of model parameters $\modelparams$, the model output distribution $q(U)$ can be considered a variational ansatz for the target distribution $p(U)$. Its free parameters can be optimized to produce an approximation to the target distribution, $q(U) \approx p(U)$, by applying stochastic gradient descent to a loss function defined to be a measure of the divergence between $q(U)$ and $p(U)$. For this optimization to be viable without a large body of training data from existing samplers, we must be able to approximate the divergence and its gradients using only samples from the model and the functional form of the action. This may be achieved by employing the Kullback-Leibler (KL) divergence between the two distributions as a loss function:
\begin{equation} \label{eqn:kl-div}
  \DKL(q||p) := \int \D{U} q(U) \left[ \log{q(U)} - \log{p(U)} \right] \geq 0.
\end{equation}
For lattice theories, it is convenient to shift the KL divergence to remove the (unknown) constant $\log{Z}$, defining a modified KL divergence~\cite{LiWang2018NNRG}\footnote{This can be considered a special case of the variational lower bound~\cite{Blei_2017}.}
\begin{equation} \label{eqn:kl-div-shift}
  \DKL'(q||p) := \int \D{U} q(U) \left[ \log{q(U)} + S(U) \right] \geq -\log{Z}.
\end{equation}
The gradients and location of the minimum are unaffected by this constant shift. The KL divergence can then be stochastically estimated by drawing samples $U$ from the model and computing the sample mean of $\log{q(U)} + S(U)$, from which stochastic gradients with respect to the model parameters $\modelparams$ can be computed via backpropagation.

It is illuminating to consider the variational ansatz as defining a family of \emph{effective actions}, any of which we can directly sample, i.e., the model density can be interpreted as arising from the effective action $\Seff(U) := -\log(q(U))$.
The ability to both compute the effective action and sample from it enables producing unbiased estimates of observables under the true distribution. For example, a reweighting approach can be used~\cite{noe2019boltzmann}, in which the vacuum expectation value of an operator $\obs$ can be computed as
\begin{equation} \label{eqn:exact-reweighting}
  \begin{gathered}
  \braket{\obs} = \frac{
    \int \D{U} q(U) \left[ \obs(U) w(U) \right]
  }{
    \int \D{U} q(U) \left[ w(U) \right]
  }
  = \frac{\braket{\obs(U) w(U)}_{\Seff}}{\braket{ w(U) }_{\Seff}}, \\
  \text{where}\quad w(U) = \exp\left( -S(U) + \Seff(U) \right).
  \end{gathered}
\end{equation}
Since $\Seff$ is an approximation of the true action, the reweighting factors $w(U)$ will vary with $U$. A measure of the quality of the reweighted ensemble is the \emph{effective sample size} (ESS),
\begin{equation} \label{eqn:ess}
  \ESS := \frac{\left(\frac{1}{n} \sum_i w(U_i) \right)^2}{\frac{1}{n} \sum_i w(U_i)^2}, \quad
  U_i \sim q(U),
\end{equation}
which is normalized relative to the total number of samples $n$ such that $\ESS = 1$ for a perfect model. This reweighting approach is computationally efficient when computing observables is inexpensive relative to drawing samples from the model, because the extra cost of computing observables on samples which will be severely down-weighted is small.

When computing observables is instead expensive relative to drawing samples from the model, producing unbiased estimates of observables by resampling techniques can be more efficient than reweighting. A \emph{flow-based Markov chain} is one such approach~\cite{Albergo:2019eim,Kanwar:2020xzo}.\footnote{In some situations, either bootstrap resampling with weights (also known as Sampling Importance Resampling)~\cite{Rubin1987SIR} or rejection sampling may be useful. In the former approach, the ensemble size cannot easily be expanded, while in the latter, a multiplicative factor $M$ must be chosen such that $M q(U) \geq p(U)$ while avoiding excessive rejection; these challenges motivate the use of flow-based MCMC in this work.} In a flow-based Markov chain, samples from the model are used as proposals for each step of the chain, with a Metropolis accept/reject step to guarantee asymptotic exactness. Each proposal is independent of the previous configuration in the chain, and therefore the appropriate acceptance probability is
\begin{equation} \label{eqn:metropolis-pacc}
  p_{\text{acc}}(U \rightarrow U') = \min\left(1, \frac{p(U')}{q(U')} \frac{q(U)}{p(U)}\right).
\end{equation}
When the model closely approximates the target, $q(U) \approx p(U)$, the acceptance rate will be close to 1. Rejections duplicate the previous state of the chain, and observables only need to be computed once on each sequence of duplicated samples in the chain. Essentially, the Markov chain approach acts as an integer rounding of the reweighting factors, and thus resources are efficiently allocated towards computing observables only on sufficiently likely configurations. In the flow-based Markov chain, the analogue of the effective sample size is determined by correlations between sequential configurations; these correlations are introduced entirely through rejections, since proposals are independently drawn from the model.

The efficiency of the flow-based sampling approach hinges on implementing a general and well-parameterized function $f$, which must be invertible and for which $\LDJ_f$ must be tractable. A powerful approach to constructing such functions is through composition of simpler functions $g_i$:
\begin{equation} \label{eqn:decompose-f}
  f(V) := g_n(g_{n-1}(\dots g_1(V) \dots )).
\end{equation}
When each $g_i$ is invertible and has a tractable log-det-Jacobian, $f$ satisfies these properties as well. In the following sections, we choose the $g_i$ to be \emph{coupling layers}: functions that act elementwise on a subset of the components of the input, conditioned on the complimentary (``frozen'') subset. This structure guarantees a triangular Jacobian matrix, allowing $\LDJ_f$ to be efficiently computed from the diagonal elements of the matrix. Coupling layers generally guarantee invertibility by defining the transformation as an explicitly invertible operation on the input.
For example, a coupling layer could transform a link in a gauge configuration by left-multiplication with a group element that only depends on nearby frozen links and model parameters, $\modelparams$:
\begin{equation} \label{eqn:example-link-transform}
  \begin{gathered}
  U_\mu(x) \stackrel{\textrm{e.g.}}{\rightarrow} U'_\mu(x) = W_{\modelparams}(\text{frozen neighbors}) U_\mu(x),
  \end{gathered}
\end{equation}
where $W_{\modelparams}(\text{frozen neighbors}) \in \G$.
Regardless of the function $W_\modelparams$, this transformation is invertible: to undo it, we compute $\left[W_\modelparams(\text{frozen neighbors})\right]^{-1}$ and left-multiply.

In general, coupling layers are written in terms of functions of the frozen links and model parameters (analogous to $W_\modelparams$ in the example above), which we call \emph{context functions}. The outputs of these context functions are used to transform the input in a manifestly invertible way, but the functions themselves may be arbitrary, up to producing output in the correct domain (in our example, returning values in $\G$). These functions are therefore typically implemented in terms of feed-forward neural networks, with the model parameters $\modelparams$ specifying the neural network weights.

\subsection{Symmetries in flow models} \label{subsec:flow-based-symm}
Symmetries in a lattice gauge theory manifest as transformations of field configurations that leave the action invariant for all field configurations. We write the transformation $t$ acting on a field configuration $U$ as $t \cdot U$; a group of transformations $H$ is then a symmetry group when $S(U) = S(t \cdot U)$ for all $t \in H$ and all $U$.  Lattice actions $S(U)$ are commonly constructed to preserve discrete geometric symmetries of the Euclidean spacetime as well as internal symmetries. In particular, actions are typically invariant under the:
\begin{enumerate}
\item Discrete translational symmetry group, $T = \{ T_{\delta x, \delta y} \}$, where $\delta x, \delta y$ enumerate all possible lattice offsets;
\item Hypercubic symmetry group $R = \{ R_i \}$, where $i$ enumerates all $2^{D}(D!)$ unique combinations of rotations and reflections of the $D$-dimensional hypercube;\footnote{These operations represent the symmetry about a distinguished point on the lattice. In general, the whole geometric symmetry group is given by the combination of this group with the translational symmetry group.}
\item Gauge symmetry group, where each element $\Omega$ can be defined as a group-valued field over lattice sites, $\Omega(x) \in \G$, that transforms links of a field configuration as:
  \begin{equation}
    (\Omega \cdot U)_\mu(x) = \Omega(x) U_\mu(x) \Omega^\dagger(x+\hat{\mu}).
  \end{equation}
\end{enumerate}

Any expressive flow-based model should approximately reproduce the symmetries of the original action after optimization, even if these symmetries are not imposed in the model. Exact symmetries are recovered on average in the sampled distribution after reweighting or composing samples into a Markov chain. Nevertheless, any breaking of the symmetries in the model reflects differences between the model and target distribution, and is thus associated with sampling inefficiencies in the form of increased variance or correlations in the Markov chain. Imposing symmetries explicitly in the form of the model effectively reduces the variational parameter space to include only symmetry-respecting maps, i.e.~those that factorize the distribution. An example of such factorization is illustrated for gauge symmetry in Fig.~\ref{fig:symfact}. In many machine learning contexts, it has been found that explicitly preserving the symmetries of interest in models improves both the optimization costs and ultimate model quality~\cite{CohenWelling2016,Cohen:2019,khler2019equivariant,rezende2019equivariant,wirnsberger2020targeted, ZhangEWang2018Monge,Finzi:2020}. For example, gauge symmetry is a large symmetry group with dimension proportional to the number of lattice sites; in our study of $\U{1}$ gauge theory in Ref.~\cite{Kanwar:2020xzo}, it was shown that imposing this symmetry exactly was necessary to construct flow-based samplers of comparable or better efficiency than traditional sampling approaches. 

\begin{figure}
    \centering
   \includegraphics[width=\columnwidth]{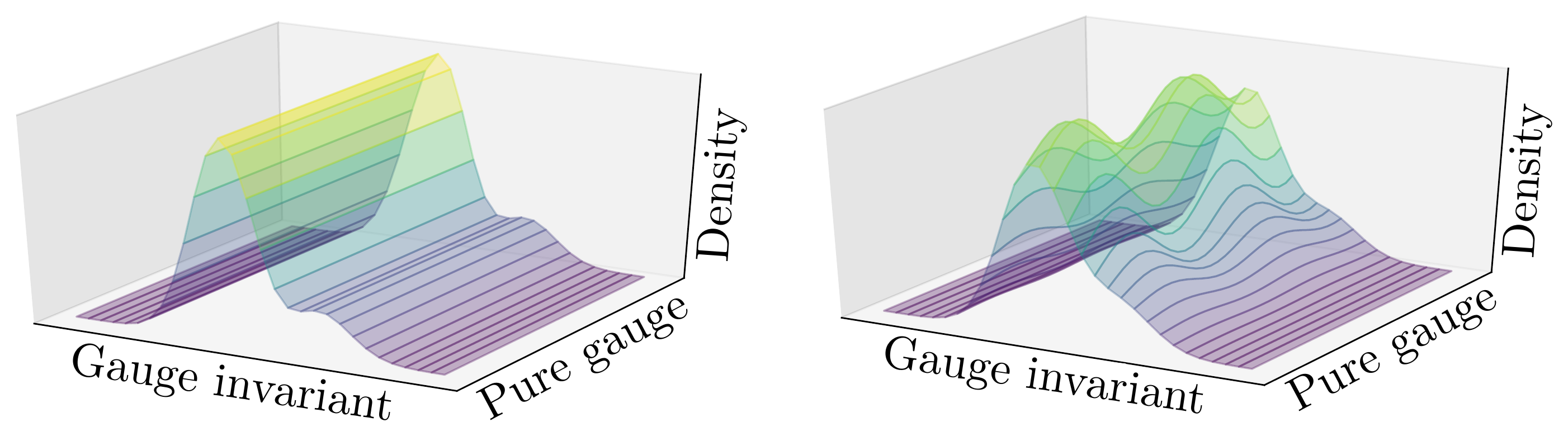}
    \caption{Left: distributions that exactly respect gauge symmetry factor over the degrees of freedom, such that they have uniform density in the pure gauge degrees of freedom and a non-trivial density only in the gauge invariant degrees of freedom. Right: arbitrary distributions on the space of gauge configurations do not factor, and uniformity in the pure gauge direction must be approximately learned by the model.}
    \label{fig:symfact}
\end{figure}

Interactions between symmetry groups are also an important consideration. For example, a simple way to achieve the factorization of the model distribution depicted in Fig.~\ref{fig:symfact} would be to employ a gauge fixing procedure that reduces configurations to gauge invariant degrees of freedom only and sample only in the remaining lower-dimensional space. This could be achieved with a maximal tree gauge fixing~\cite{Creutz1977gaugeFix,DETAR1985621}. However, gauge fixing procedures like the maximal tree procedure that explicitly factorize the degrees of freedom are not translationally invariant. On the other hand, gauge fixing procedures based on implicit differential equation constraints instead of an explicit factorization are known to preserve translational invariance in the path integral formulation~\cite{DeGrand:2006zz}, but it is unclear how to restrict flow-based models to act on configurations satisfying these constraints. Recent work in the Hamiltonian formulation has suggested ways to factor out pure gauge degrees of freedom for $\U{1}$ gauge theory, but it is not clear whether this can be extended to $\SUn$ gauge theory or the path integral formulation~\cite{Bender:2020ztu}. Here we develop an approach to simultaneously impose gauge and translational symmetries on models acting on all of the degrees of freedom of an $\SUn$ gauge field, without any preemptive factorization along the lines of gauge fixing.

To preserve a symmetry in a flow-based sampling model, it is sufficient to sample from a prior distribution that is exactly invariant under the symmetry and transform the samples using an invertible transformation that is \emph{equivariant} under the symmetry~\cite{bender2019exchangeable, rasul2019set, kohler2020equivariant}, meaning that symmetry transformations $t$ commute with application of the function,
\begin{equation} \label{eqn:equivariance}
  f(t \cdot U) = t \cdot f(U).
\end{equation}
For lattice gauge theories, a uniform prior distribution (with respect to the product Haar measure) is easily sampled and is symmetric under translations, hypercubic symmetries, and gauge symmetry. Equivariance of the map $f$ can be guaranteed by ensuring that the individual coupling layers in the decomposition of $f$ are each equivariant:
\begin{equation} \label{eqn:equivariance-decomp}
  \begin{gathered}
  g_i(t \cdot U) = t \cdot g_i(U) \\
  \implies f(t \cdot U) = g_n(g_{n-1}(\dots g_1(t \cdot U) \dots )) = t \cdot f(U).
  \end{gathered}
\end{equation}
In our approach~\cite{Kanwar:2020xzo}, coupling layers decompose the components of a field configuration by spacetime location, and therefore making coupling layers equivariant to spacetime symmetries (translational and hypercubic symmetries) and making coupling layers equivariant to internal symmetries (such as gauge symmetry) must be handled in different ways, but can be simultaneously achieved.

It has been noted that convolutional neural networks are equivariant to discrete translations, and a similar approach can extend equivariance to rotations and reflections~\cite{cohen2016group,cohen2019gauge}. For lattice gauge theory, using these equivariant networks acting on the frozen links inside each coupling layer \emph{and choosing symmetric decompositions} into frozen and updated links ensures each coupling layer is equivariant under (a large subgroup of) translations. For example, in Sec.~\ref{sec:sun-gauge-theory} we construct models for two-dimensional gauge theory using convolutional neural networks with a decomposition pattern that repeats after offsets by 4 sites in both directions on the lattice, resulting in equivariance under the translational symmetry group modulo $\mathbb{Z}_4 \times \mathbb{Z}_4$. Though the full translational symmetry group is not preserved exactly, the residual group that must be learned has a fixed size independent of the lattice volume.

Internal symmetries, on the other hand, do not mix links at different spacetime locations. The symmetry transformations acting on the frozen links already commute through the coupling layer. The updated links, however, must be transformed specifically to guarantee equivariance. Generally, this can be achieved by making the context function (i.e.~the analogue of $W_\modelparams$ acting on frozen links in Eq.~\eqref{eqn:example-link-transform}) \emph{invariant} to symmetry transformations, and defining how the function is applied to the remaining links such that the operation commutes with symmetry transformations. This must be done based on the form of the symmetry group; we review how this can be achieved for the case of gauge symmetries in the following section.

\subsection{Gauge equivariance} \label{subsec:gauge-equiv}

In Ref.~\cite{Kanwar:2020xzo}, we presented a framework for the construction of coupling layers that are equivariant under gauge symmetries. At a high level, each coupling layer is constructed to:
\begin{enumerate}
    \item Change variables to open (untraced) loops of links that start and end at a common point;
    \item Act on these loops in a way that is equivariant under matrix conjugation; we call the function acting in this way a \emph{kernel};
    \item Change variables back to links to compute the resulting action on the gauge configuration.
\end{enumerate}
Under a gauge transformation, each open loop transforms by matrix conjugation. The kernel acting on open loops is equivariant under matrix conjugation, thus the whole coupling layer is gauge equivariant. Matrix conjugation leaves the set of eigenvalues, i.e.~the \emph{spectrum}, of the open loop invariant. Arranging the coupling layer in terms of the spectra of open loops thus allows the flow to directly manipulate these physical, gauge-invariant, marginal distributions independently of the pure-gauge degrees of freedom.

In our implementation, we use $1 \times 1$ loops, or \emph{plaquettes}, as the open loops transformed by the kernel. The plaquette oriented in the $\mu\nu$ plane and located at site $x$ is defined in terms of the links by\footnote{Note that there is no trace and $P_{\mu\nu}(x)$ is matrix-valued.}
\begin{equation}
    P_{\mu\nu}(x) := U_\mu(x) U_\nu(x+\hat{\mu}) U^\dagger_\mu(x+\hat{\nu}) U^\dagger_\nu(x).
\end{equation}
A subset of plaquettes is transformed by the kernel, while the traces of unmodified plaquettes are used as gauge invariant input to the context functions in the transformation.\footnote{The use of plaquettes as the open loops and gauge invariant inputs is one of many possible choices. For either the open loops or gauge invariant inputs, plaquettes could be replaced or augmented by other choices of loops.} After the kernel acts on untraced plaquettes, $P_{\mu\nu}(x) \rightarrow P'_{\mu\nu}(x)$, we change variables back to links and implement the update on the gauge configuration as
\begin{equation} \label{eqn:transform-link-from-plaq}
    U'_\mu(x) = P'_{\mu\nu}(x) P^\dagger_{\mu\nu}(x) U_\mu(x)
\end{equation}
so that the plaquette is updated as desired,
\begin{equation}
U'_\mu(x) U_\nu(x+\hat{\mu}) U^\dagger_\mu(x+\hat{\nu}) U^\dagger_\nu(x) = P'_{\mu\nu}(x).
\end{equation}
Equivariance under matrix conjugation ensures that output plaquettes transform appropriately under the gauge symmetry,
$ (\Omega \cdot P')_{\mu\nu}(x) = \Omega(x) P'_{\mu\nu}(x) \Omega^\dagger(x) $,
and therefore the output configuration does as well:
\begin{equation}
\begin{aligned}
    (\Omega \cdot U')_\mu(x) =&\,
    \left[ \Omega(x) P'_{\mu\nu}(x) \Omega^\dagger(x) \right]
    \left[ \Omega(x) P^\dagger_{\mu\nu}(x)  \Omega^\dagger(x) \right] \\
    &\,\times
    \left[ \Omega(x) U_\mu(x) \Omega^\dagger(x+\hat{\mu}) \right] \\
    =&\, \Omega(x) U'_\mu(x) \Omega^\dagger(x+\hat{\mu}).
\end{aligned}
\end{equation}
This general construction is schematically depicted in the outer, gray sections of Fig.~\ref{fig:gauge-equiv-overview}.

\begin{figure}
    \centering
    \includegraphics[width=\linewidth]{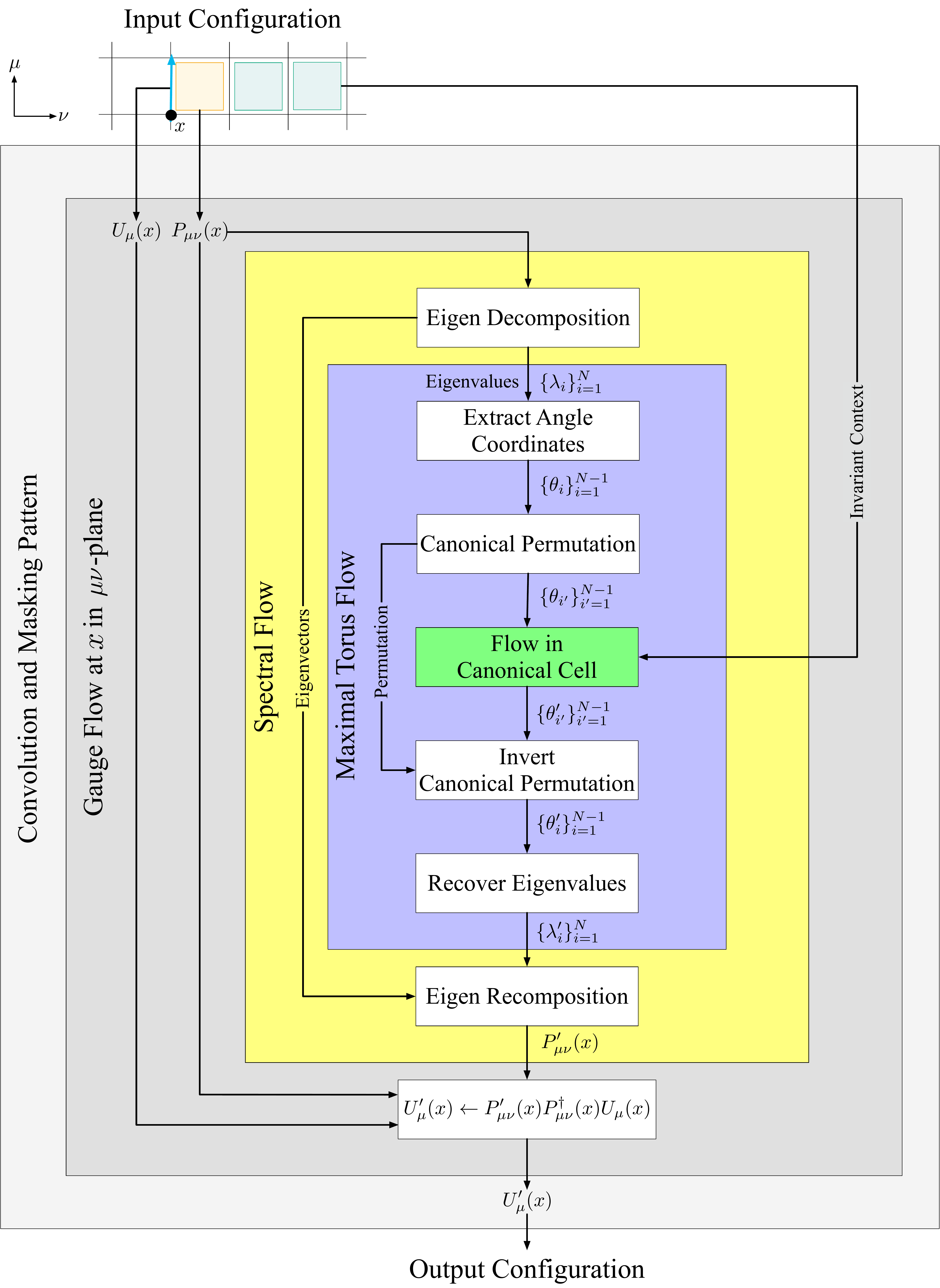}
    \caption{Decomposition of a single gauge equivariant coupling layer. Outer gray sections depict the general formulation of gauge equivariant flows detailed in Ref.~\cite{Kanwar:2020xzo}. Inner colored sections detail the kernel we construct in Sec.~\ref{sec:single-sun-flows} for a single $\SUn$ variable.}
    \label{fig:gauge-equiv-overview}
    \vspace{.2in}

    \centering
    \includegraphics[width=1.0\linewidth]{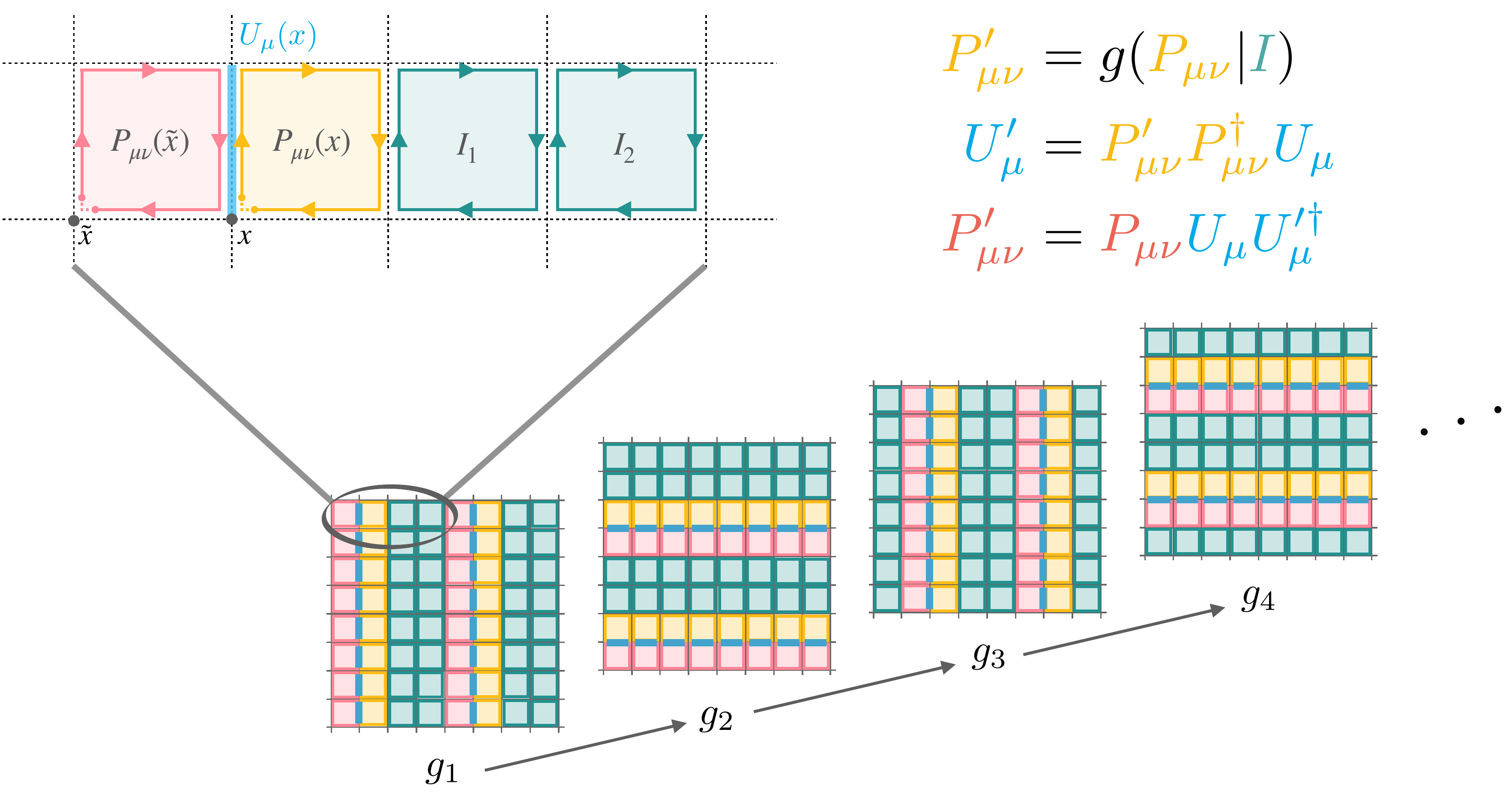}
    \caption{Our choice of plaquettes to update [$P_{\mu\nu}(x)$, yellow], gauge invariant context for that transformation [$I_1$ and $I_2$, green], the corresponding updated link [$U_\mu(x)$, blue], and the plaquettes passively modified as a result of the link update [$P_{\mu\nu}(\tilde{x})$, red] for two-dimensional gauge theory. A repeating cycle of rotations and translations are applied to the pattern for successive coupling layers; composition of 8 coupling layers is sufficient to update every link once for this pattern.} 
    \label{fig:mask_passive_active}
\end{figure}

Finally, to ensure invertibility, we require that the term $P^\dagger_{\mu\nu}(x) U_\mu(x) = U_\nu(x) U_\mu(x+\hat{\nu}) U^\dagger_\nu(x+\hat{\mu})$ in Eq.~\eqref{eqn:transform-link-from-plaq} does not contain any links that are updated as a result of other plaquettes being transformed. In our construction, we must choose the subsets of loops to transform, and the corresponding links to update, in such a way that any loop that is \emph{actively} transformed is not also modified \emph{passively} as a byproduct of another loop being transformed. There are many possible ways to choose subsets satisfying these constraints; to ensure that all links are updated, we should also choose different subsets of loops to update in each coupling layer. For example, in our application to two-dimensional gauge theory we choose to update rows or columns of plaquettes that are spaced 4 sites apart, with a repeating cycle of offsets and rotations in each successive coupling layer, as depicted in  Fig.~\ref{fig:mask_passive_active}. Note that the subsets of plaquettes that are actively and passively updated are disjoint in all coupling layers.

In Ref.~\cite{Kanwar:2020xzo}, we applied this general gauge equivariant construction to $\U{1}$ gauge theory. Our contribution in the present work is the development of transformations that are equivariant under matrix conjugation in $\SUn$ (with a straightforward adaptation to $\Un$) which can be used as kernels for gauge equivariant coupling layers in $\SUn$ or $\Un$ lattice gauge theory. This novel contribution is depicted in the inner, colored sections of Fig.~\ref{fig:gauge-equiv-overview}. We detail these transformations in the next section.

\begin{figure*}
    \centering
    \includegraphics[width=0.8\linewidth]{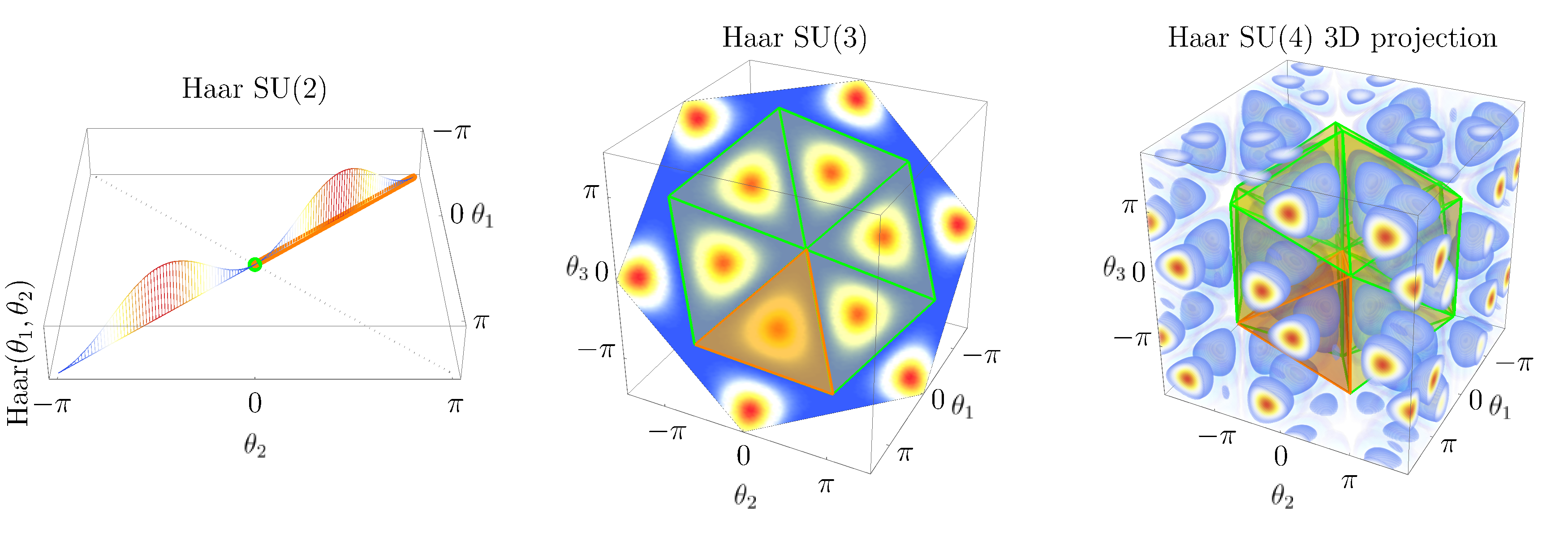}
    \caption{Illustration of the eigenvalue spaces and respective Haar measures in the angular coordinate system $\theta_k = \arg (\lambda_k) $ for $\SU{2}$ [left], $\SU{3}$ [middle], and $\SU{4}$ [right]. Eq.~\eqref{eqn:haar-density} describes how the Haar measure is included in these plots over the space of eigenvalues.
    The constraint $\det{U} = 1$ restricts the space of eigenvalues to the surface of codimension 1 defined by $\sum_k \theta_k = 0 \pmod{2\pi}$ depicted in each space. On each surface, permutation of the axes corresponds to permutation among the $N!$ cells delineated by green boundaries. A canonical cell used to construct permutation equivariant coupling layers is highlighted in orange for each surface. For $\SU{4}$, we show the surface of eigenvalues projected to an orthonormal basis in the constraint surface. For clarity in the $\SU{3}$ and $\SU{4}$ figures, we extend the range of the axes rather than showing the parts of the eigenvalue surface that would wrap around the periodic boundaries.
    }
    \label{fig:conj_equiv}
\end{figure*}

\section{Flow models for single $\SUn$ variables} \label{sec:single-sun-flows}
The key component of a gauge equivariant flow-based model is a kernel: an invertible map that acts on a single group-valued variable and is equivariant under matrix conjugation. Specifically, an invertible map $h: \G \rightarrow \G$ is a kernel if $h(X U X^{-1}) = X h(U) X^{-1}$ for all $U, X \in \G$. In constructing a gauge equivariant flow-based model, the kernel is used to transform untraced loops of links starting and ending at a common point (whose spectrum has physical, gauge-invariant meaning). Here, we specify a general method to construct such kernels and investigate application of these kernels to sampling probability densities on single $\SUn$ or $\Un$ variables (representing marginal distributions on open loops in the full gauge theory).

In the language of groups, a kernel should move density \emph{between} conjugacy classes while preserving structure \emph{within} those classes. Each conjugacy class is defined by a set $\{X U X^{-1}: X \in \G\}$, for some $U$. It is useful, however, to think of each conjugacy class in $\SUn$ or $\Un$ as a set of all matrices with some particular spectrum; for example, all matrices with eigenvalues $\{ e^{i 3 \pi / 12}, e^{i 5\pi / 12}, e^{- i 8 \pi / 12} \}$ form a conjugacy class in $\SU{3}$. Intuitively, a kernel should therefore move density between possible $N$-tuples of eigenvalues while preserving the eigenvector structure. In Appendix~\ref{sec:equiv-conj-perm} we prove that this intuition is exact: a kernel can generally be defined as an invertible map that acts on the list of eigenvalues of the input matrix, is equivariant under permutations of the eigenvalues, and leaves the eigenvectors unchanged. In our applications, we therefore structure the kernel to accept a matrix-valued input, diagonalize it to produce an (arbitrarily ordered) list of eigenvalues and eigenvectors, transform the eigenvalues in a permutation equivariant fashion, then reconstruct the matrix using the new eigenvalues. Fig.~\ref{fig:gauge-equiv-overview} depicts how this \emph{spectral flow} is applied in the context of a gauge equivariant coupling layer.

Permutation equivariance is required to ensure that the kernel acts only based on the spectrum, not the particular order of eigenvalues produced during diagonalization. Normalizing flows that are permutation equivariant have previously been investigated in the machine-learning community to learn densities over sets (such as point-clouds, objects in a 3D scene, particles in molecular dynamics, and particle tracks in collider events)~\cite{bender2019exchangeable,rasul2019set,kohler2020equivariant,guttenberg2016permutation,rahme2020permutation,ravanbakhsh2017equivariance,gordon2019permutation,maron2018invariant,segol2019universal,sannai2019universal,Komiske_2019}.
Such approaches are directly applicable to kernels for $\Un$ variables (see Appendix~\ref{sec:case_of_un}), because the eigenvalues can be transformed independently. For an $\SUn$ variable, however, the constraint $\det{U} = 1$ must additionally be satisfied, which prevents these methods from being straightforwardly applied.
As an example, Figure~\ref{fig:conj_equiv} depicts the space of eigenvalues of $\SU{2}$, $\SU{3}$, and $\SU{4}$ variables and illustrates the constrained surface of possible eigenvalues as well as the cells on this surface that are related by permutations in each case. To be equivariant, a spectral flow for $\SUn$ must transform values within each cell identically.

In this section, we describe special-case constructions of permutation equivariant transformations on the eigenvalues of an $\SU{2}$ or $\SU{3}$ variable, then generalize the approach to $\SUn$. In each case, we demonstrate the expressivity of these transformations by constructing flow-based models in terms of these transformations and training the models to learn several target families of densities that are invariant under matrix conjugation.

\subsection{Target densities}\label{subsec:target_densities}
As target distributions to test this approach, we define densities on $\SUn$ matrices that are invariant under matrix conjugation. For an $\SUn$ variable in the fundamental matrix representation, such a class of probability densities can be defined in terms of traces of powers of the variable,
\begin{equation} \label{eqn:toy-dists}
\begin{gathered}
    p_{\text{toy}}^{(i)}(U) := e^{-S_i(U)} / Z_i, \quad Z_i = \int dU e^{-S_i(U)},
\end{gathered}
\end{equation}
where
\begin{equation}\label{eq:target_action}
    S_i(U) := -\frac{\beta}{N} \Re \tr \left[ \sum_n c^{(i)}_n  U^n \right]
\end{equation}
and $\int dU$ is integration with respect to the Haar measure of the group. Any distribution in this family is manifestly invariant under matrix conjugation, and is therefore a function of the spectrum only. The coefficients $c^{(i)}$ determine the shape of the density on the group manifold, while $\beta$ determines the scale of the density.

The coefficients $c^{(i)}$ defining the target densities for this study are reported in Table~\ref{tab:toy-params}. The first set of coefficients, $c^{(0)}$, was chosen to exactly match the marginal distribution on each open plaquette in the case of two-dimensional lattice gauge theory. To further investigate densities with similar structure, two additional sets of coefficients were chosen by randomly drawing values for $c^{(i)}_1$, $c^{(i)}_2$, and $c^{(i)}_3$, and restricting to coefficients that produce a single peak in the density across all values of $\beta$. Performance on this set of coefficients is therefore representative of the ability of these flows to learn the local densities relevant to sampling for two-dimensional lattice gauge theory.

\begin{table}
    \centering
    \begin{tabular}{c @{\hskip0.3in} c @{\hskip0.2in} c @{\hskip0.2in} c}
        \toprule
        set $i$ & $c^{(i)}_1$ & $c^{(i)}_2$ & $c^{(i)}_3$ \\
        \midrule
        0 &  1 & 0 & 0 \\
        1 &  0.17 & -0.65 &  1.22  \\
        2 &  0.98 & -0.63 & -0.21 \\
        \bottomrule
    \end{tabular}
    
    \caption{Sets of coefficients $c^{(i)}_n$ used to investigate the $\SU{2}$ and $\SU{3}$ matrix conjugation equivariant flow.}
    \label{tab:toy-params}
\end{table}

To investigate the expressivity of the permutation equivariant transformations that we define, we construct flow-based models that combine a uniform prior density with one kernel defined using the equivariant transformations under study. This combination of an invariant prior distribution with application of an equivariant kernel imposes matrix conjugation symmetry on each flow-based model exactly. As a metric for the expressivity of the permutation equivariant transformations used in each kernel, we checked the ability of the flow-based models to reproduce the target densities. Measurements of the ESS and plots of the densities are used to investigate model quality.

When plotting densities in the space of eigenvalues, as in Fig.~\ref{fig:conj_equiv} above and the density plots below, we always plot with respect to the Lebesgue measure on the eigenvalues. This is a natural choice, as densities with respect to this measure are what one expects to reproduce using histograms in the space of eigenvalues. However, the full model on $\SUn$ reports densities with respect to the Haar measure.
When restricting to the space of eigenvalues, the resulting measure is absolutely continuous with respect to the Lebesgue measure with density given by the volume in $\SUn$ of conjugacy classes.
This volume is given by~\cite{Bump2004}:
\begin{equation} \label{eqn:haar-density}
    \textrm{Haar}(\lambda_1, \dots, \lambda_N) = \prod_{i < j} \left| \lambda_i - \lambda_j \right|^2.
\end{equation}
See also the Weyl integration formula and the case of $\SU3$ in~\cite{duistermaat2012lieChap3}.

\subsection{Flows on $\SU{2}$} \label{subsec:su2-flows}
The eigenvalues of an $\SU{2}$ matrix can generically be written in terms of a single angular coordinate as $\lambda_1 = e^{i \theta}$ and $\lambda_2 = e^{-i \theta}$. The permutation group $S_2$
on these eigenvalues is generated by the exchange $\lambda_1 \leftrightarrow \lambda_2$, which corresponds to $\theta \rightarrow -\theta$. We can therefore define a flow on $\theta$ which is equivariant under this transformation by separately handling the case of $\theta \in [-\pi, 0]$ and $\theta \in [0, \pi]$:
\begin{enumerate}
    \item If $\theta$ is in the first interval, negate it (otherwise, do nothing).
    \item Take the result and apply any invertible flow suitable for a variable in the finite interval $[0, \pi]$; for example, a spline flow with fixed endpoints could be applied~\cite{durkan2019neural}.
    \item If $\theta$ was negated in the first step, negate the result (otherwise, do nothing).
\end{enumerate}
In effect, this extends the action of a flow on one \emph{canonical cell}, $\theta \in [0, \pi]$, to the entire domain in a permutation equivariant fashion. The canonical cell for $\SU{2}$ is schematically depicted in the left panel of Fig.~\ref{fig:conj_equiv}. This intuition is useful to extend the method to $\SU{3}$ and generic $\SUn$ variables in the following subsections.

To investigate the efficacy of this permutation equivariant spectral flow, we constructed $\SU{2}$ flow-based models to sample from each of the families of distributions defined by Eq.~\eqref{eqn:toy-dists}, with coefficients listed in Table~\ref{tab:toy-params}, for each $\beta \in \{ 1, 5, 9 \}$. All models were constructed with a uniform prior distribution (with respect to the Haar measure of $\SU{2}$) and a single matrix conjugation equivariant coupling layer, defined using the permutation equivariant spectral flow above. The transformation on the canonical cell $[0,\pi]$ was performed with a spline flow defined using 4 knots. Each model was trained using the Adam optimizer~\cite{kingma2014adam} with gradients of the loss function in Eq.~\eqref{eqn:kl-div-shift} stochastically evaluated on batches of 1024 samples per step. Appendix~\ref{sec:backprop-diag} describes how gradients can be backpropagated through matrix diagonalization during optimization.

The densities learned by the flow-based model are compared against the target densities in Fig.~\ref{fig:su2-toy-dists}. The peaks of the distribution are very precisely reproduced by the flow-based model, and the exact symmetry between the two cells (left and right half of each plot) is apparent for both the model and target densities. Minor deviations between the model and target densities appear in the tails of the distribution, below roughly a density of $10^{-4}$. These are rarely sampled regions, thus these deviations only have a minor impact on model quality: all models reached an ESS above 97\% for all sets of coefficients, as shown in Table~\ref{tab:su2_toy_ess}.

\begin{figure}
    \centering
    \includegraphics{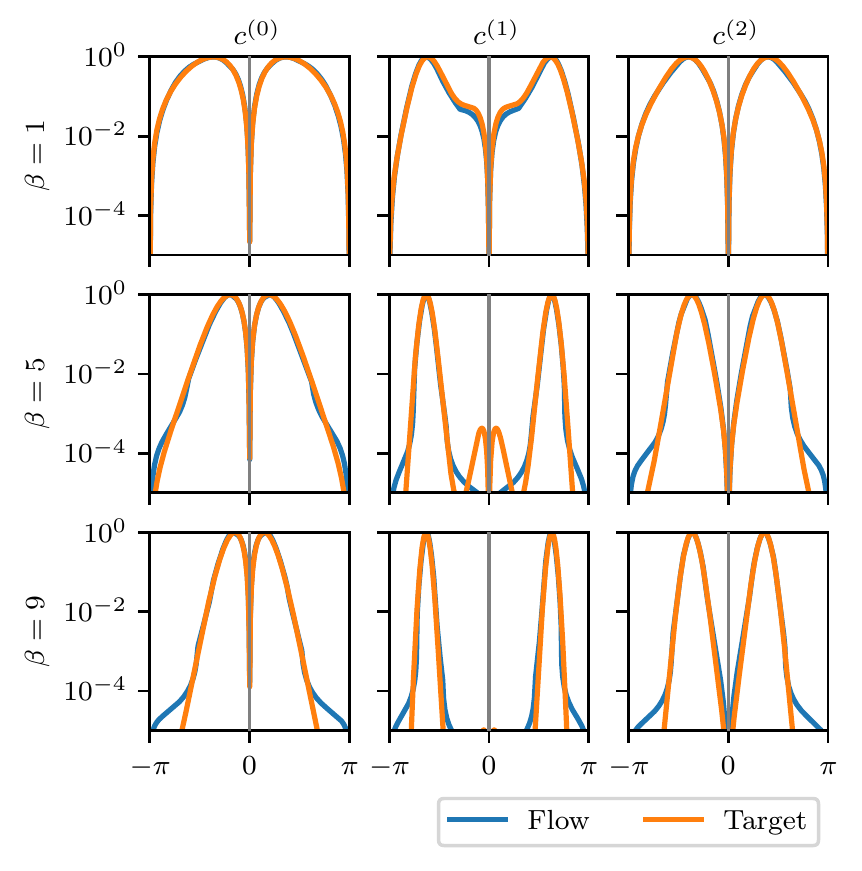}
    \caption{Densities on the angular coordinate $\theta$ describing the eigenvalues of an $\SU{2}$ variable. The mirror symmetry across $\theta = 0$ corresponds to invariance of the distribution with respect to permutation of the eigenvalues; this symmetry is exactly enforced in the flow-based distribution using a permutation equivariant coupling layer.}
    \label{fig:su2-toy-dists}
\end{figure}

\begin{table}
    \centering
    \renewcommand\arraystretch{1.2}
    \begin{tabular}{ l *{3}{
        @{\hskip0.25in}
        >{\centering\arraybackslash}p{0.16in}
        @{\hskip0.07in}
        >{\centering\arraybackslash}p{0.16in}
        @{\hskip0.07in}
        >{\centering\arraybackslash}p{0.16in}
        } }
    \toprule
    & \multicolumn{3}{c@{\hskip0.25in}}{$c^{(0)}$} & \multicolumn{3}{c@{\hskip0.25in}}{$c^{(1)}$} & \multicolumn{3}{c}{$c^{(2)}$} \\
    \midrule
    $\beta$ & 1 & 5 & 9 & 1 & 5 & 9 & 1 & 5 & 9 \\
    ESS(\%) & 100 & 100 & 100 & 98 & 98 & 97 & 100 & 99 & 100 \\
    \bottomrule
    \end{tabular}
    \caption{Final values of the ESS for each model trained for distributions on an $\SU{2}$ variable.}
    \label{tab:su2_toy_ess}
\end{table}

\subsection{Flows on $\SU{3}$}  \label{subsec:su3-flows}
The eigenvalues of an $\SU{3}$ matrix can generically be written in terms of two angular variables as $\lambda_1 = e^{i \theta_1}$, $\lambda_2 = e^{i \theta_2}$, and $\lambda_3 = e^{-i\theta_1 -i\theta_2}$. There are six cells related by the permutation group $S_3$ on these three eigenvalues, as depicted in the middle panel of Fig.~\ref{fig:conj_equiv}. We can define a permutation equivariant flow on these angular variables by extending a flow on a canonical cell to the whole space, as was done for $\SU{2}$ in the previous section:
\begin{enumerate}
    \item Enumerate all possible permutations of $[\theta_1, \theta_2, \theta_3]$, where $\theta_3 := \textrm{wrap}(-\theta_1 - \theta_2)$ is the phase of $\lambda_3$ in the interval $[-\pi,\pi]$.
    \item Choose the order $[\theta_{1'}, \theta_{2'}, \theta_{3'}]$ satisfying the canonical condition, $\textrm{iscanon}(\theta_{1'}, \theta_{2'}, \theta_{3'})$. This makes $(\theta_{1'}, \theta_{2'})$ fall in the shaded region in Fig.~\ref{fig:su3_maximal_torus_cells_-pipi}. Record the permutation required to move from the original order to the canonical order. \label{step:canon-perm-su3}
    \item Since the shaded domain in Fig.~\ref{fig:su3_maximal_torus_cells_-pipi} is split in two, replace $\theta_{1'}$ with $(\theta_{1'}-2\pi)$ if $\theta_{1'}>0$ to maintain a connected domain. Apply any invertible flow suitable for the canonical triangular domain of $\theta_{1'}$ and $\theta_{2'}$; our implementation is discussed below.
    \item Reconstruct the final angular variable $\theta'_{3'} = \textrm{wrap}(-\theta'_{1'}-\theta'_{2'})$, then apply the inverse of the permutation in step \ref{step:canon-perm-su3} to produce the final eigenvalue phases $[\theta'_1, \theta'_2, \theta'_3]$.
\end{enumerate}
For $\SU{3}$, we can define the canonical condition on eigenvalue phases in an ad hoc fashion,
\begin{equation}
    \textrm{iscanon}(\theta_1, \theta_2, \theta_3) = \begin{cases}
    \theta_3 \geq \theta_2 \geq \theta_1 & \sum_i \theta_i = 0 \\
    \theta_1 \geq \theta_3 \geq \theta_2 & \sum_i \theta_i = 2\pi \\
    \theta_2 \geq \theta_1 \geq \theta_3 & \sum_i \theta_i = -2\pi
    \end{cases}.
\end{equation}
Intuitively, this function defines a canonical ordering of the eigenvalues while smoothly accounting for the fact that they are circular variables. This intuition is made more precise in the generalization of this approach to $\SUn$ variables in the following subsection. The ad hoc shift used to move the cell to a contiguous region is also addressed when generalizing.

Mapping to and from canonical cells is one particular construction for permutation equivariant flows. Appendix~\ref{sec:su3-equiv-by-averaging} details an alternate method based on averaging over all permutations for $\SU{3}$. In that approach, equivariance is also guaranteed, but the cost scales as $N!$ making it unsuitable for large $N$.

\begin{figure}
    \centering
    \includegraphics{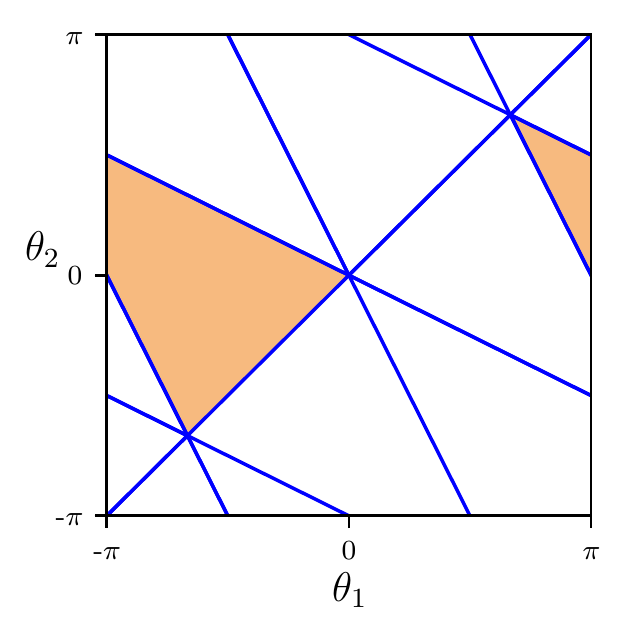}
    \caption{The cell decomposition of the maximal torus of $\SU{3}$, viewed in the $(\theta_1, \theta_2)$ coordinate system. The orange shaded cell is our choice of canonical cell.}
    \label{fig:su3_maximal_torus_cells_-pipi}
\end{figure}

We investigated the efficacy of this permutation equivariant spectral flow by constructing $\SU{3}$ flow-based models to sample from the families of distributions defined by Eq.~\eqref{eqn:toy-dists}, with coefficients listed in Table~\ref{tab:toy-params}, for each $\beta \in \{1, 5, 9\}$. All models were constructed with a uniform prior distribution (with respect to the Haar measure of $\SU{3}$) and a single matrix conjugation equivariant coupling layer, defined using the spectral flow above. The transformation on the triangular canonical cell was performed using two spline flows with 4 knots each, independently acting on the height and width coordinates. Each model was trained using the Adam optimizer with gradients of the loss function in Eq.~\eqref{eqn:kl-div-shift} stochastically evaluated on batches of 1024 samples per step.

Fig.~\ref{fig:su3_poly_dists} compares the distributions learned by the flow-based models to the target distributions when $\beta = 9$. The structure of the peaks of the distribution are reproduced accurately, and the exact six-fold symmetry between the cells is apparent in both the model and target densities. Minor deviations between the model and target densities appear in the tails of the distribution, below roughly a density of $10^{-3}$. As with the $\SU{2}$ models, these deviations are in rarely sampled regions and therefore only have a minor impact on model quality. Quantitatively, our flow-based models achieved ESSs greater than $73\%$ on all distributions, with the full set of final ESS values reported in Table~\ref{tab:su3_toy_ess}. The leftmost distribution is the marginal distribution on plaquettes for two-dimensional $\SU{3}$ gauge theory; the high value of the ESS for this distribution indicates that this spectral flow is well-suited to learn such distributions in the lattice gauge theory.

\begin{figure}
    \centering
    \includegraphics{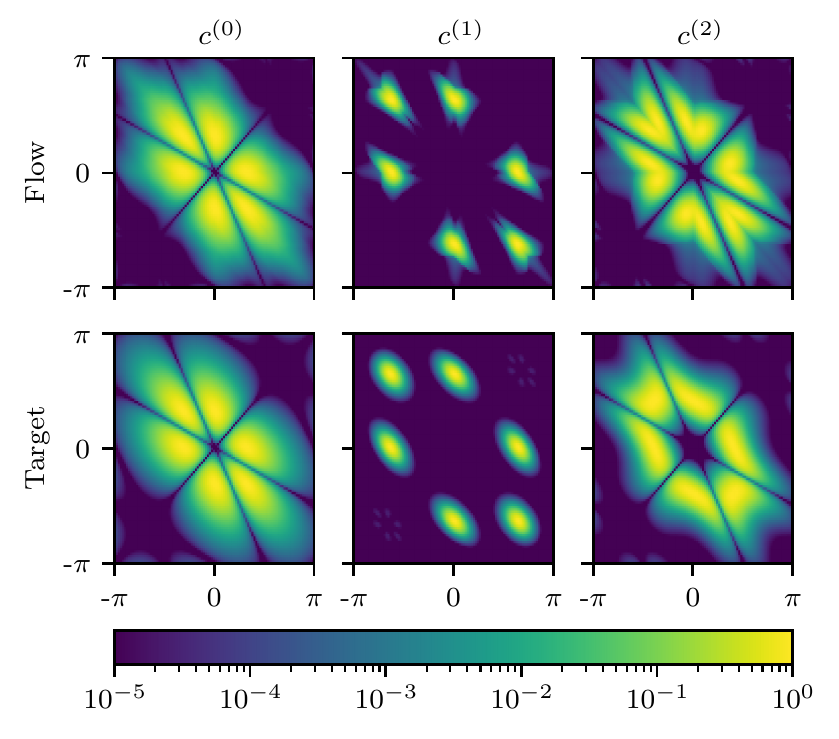}
    \caption{Densities on the angular coordinates $\theta_1$ and $\theta_2$ defining the eigenvalues of an $\SU{3}$ variable. The densities learned by the flow-based models are compared to the target densities for three distributions, each with $\beta=9$. The six-fold symmetry in each density is due to permutation invariance; this symmetry is exactly enforced in the flow-based distributions by using permutation equivariant coupling layers.}
    \label{fig:su3_poly_dists}
\end{figure}

\begin{table}
    \centering
    \renewcommand\arraystretch{1.2}
    \begin{tabular}{ l *{3}{@{\hskip0.25in}c@{\hskip0.07in}c@{\hskip0.07in}c} }
    \toprule
    & \multicolumn{3}{c@{\hskip0.25in}}{$c^{(0)}$} & \multicolumn{3}{c@{\hskip0.25in}}{$c^{(1)}$} & \multicolumn{3}{c}{$c^{(2)}$} \\
    \midrule
    $\beta$ & $1$ & $5$ & $9$ & $1$ & $5$ & $9$ & $1$ & $5$ & $9$ \\
    ESS(\%) & 99 & 98 & 99 & 97 & 80 & 82 & 99 & 91 & 73 \\
    \bottomrule
    \end{tabular}
    \caption{Final values of the ESS for each model trained for distributions on an $\SU{3}$ variable.}
    \label{tab:su3_toy_ess}
\end{table}

\subsection{Flows on $\SUn$}  \label{subsec:sun-flows}

To apply the method to $\SUn$ variables for any $N$, we develop a general version of three of the steps used above:
\begin{enumerate}
    \item Computing the vertices bounding a canonical cell;
    \item Mapping eigenvalues into that canonical cell;
    \item Applying spline transformations within that cell.
\end{enumerate}

We define cells in $\SUn$ as subsets of the maximal torus $T$, the subgroup of diagonal matrices of $\SUn$, as follows. An element of $T$ is called regular if it has $N$ distinct eigenvalues~\cite{duistermaat2012lieChap3}. The set of regular matrices in $T$ is an open set with $N!$ connected components; the closure of each component is a \emph{cell}.

To construct a general spectral flow for $\SUn$, we first choose a particular cell, which we call the \textit{canonical cell}. It is helpful to define the canonical cell in the Lie algebra $\torusalg$ of the maximal torus, rather than on the maximal torus directly. The Lie algebra $\torusalg$ is the $(N-1)$-hyperplane $\sum_{k=1}^N \theta_k=0$ in $\mathbb{R}^N$ and is related to the maximal torus by the exponential map $\textrm{exp}(\theta_1,\ldots,\theta_N)=\textrm{Diag}(e^{2\pi i\theta_1},\ldots,e^{2\pi i\theta_N})$. 
In this space, cells are $(N-1)$-simplexes enclosed by $(N-2)$-hyperplanes, each defined by a pair of eigenvalues becoming degenerate, i.e.~$\theta_j = \theta_k \pmod{2\pi}$ for some $j$ and $k$.

The $N$ vertices of any of these simplexes are mapped by the exponential map to the $N$ elements of the center of $\SUn$ (which are also elements of $T$).
We define one such simplex $\Psi$ by defining the bounding vertices $y_1,\ldots,y_N$ inside $\torusalg$,
\begin{equation}
    [y_k]_j := 2\pi \left( \frac{k}{N} - \delta_{k\geq j} \right). \label{eq:canon-cell-vert}
\end{equation}
A proof that $\exp(\Psi)$ is a cell and a derivation of this formula is given in Appendix~\ref{sec:canon-alg}. Thus we choose $\textrm{exp}(\Psi)$ as our canonical cell.

There are $N!$ ways of reordering the eigenvalues of a regular point $x=\textrm{Diag}(\lambda_1,\ldots,\lambda_N)$ in $T$, and exactly one of those falls in the canonical cell. It is intractable for large $N$ to find the element that falls in the canonical cell by checking all permutations, as we did for $\SU{3}$. Instead, we explain in Algorithm~\ref{alg:canoncell} an approach to find the pre-image in $\Psi$ of this canonical element based on sorting.
\begin{algorithm}[H]
\caption{Map into simplex $\Psi$ \label{alg:canoncell}}
canon($\lambda_1,\ldots,\lambda_N$):
\begin{enumerate}
\item extract angles in range $[0,2\pi)$, $\theta_k=\arg(\lambda_k) \mod 2\pi$
\item set $S = \frac{1}{2\pi} \sum_k \theta_k$; it is an integer because $\det{U} = 1$
\item sort the angles in ascending order $\theta^{\textrm{sort}} = \textrm{sort}(\theta)$ \label{step:sort1}
\item snap the angles to the hyperplane $\torusalg$ by\\ $\theta^{\textrm{snap}}=(\theta^{\textrm{sort}}_1,\ldots,\theta^{\textrm{sort}}_{N-S+1}-2\pi,\ldots,\theta^{\textrm{sort}}_N-2\pi)$ \label{step:sub}
\item set $\theta^{\textrm{canon}} = \textrm{sort}(\theta^{\textrm{snap}})$ \label{step:sort2}
\item return $\theta^{\textrm{canon}}$ and the combined permutation that was used to sort in step \ref{step:sort1} and \ref{step:sort2}.
\end{enumerate}
\end{algorithm}
The output of Algorithm~\ref{alg:canoncell} is a point in $\Psi$ (see Section~\ref{sec:proof_algo1}) and a permutation. To invert the map into the canonical cell after we apply a flow, we permute the flowed values $\theta'_k$ using the inverse of the returned permutation, then map them to the torus using the exponential map. Appendix~\ref{sec:proof_algo1} proves that this algorithm maps into the correct simplex. We can then show that applying the algorithm to any point in some cell returns the same output permutation by checking that the permutation does not change along any connected path within the cell. No two eigenvalues become degenerate along such a path, therefore the order of the eigenvalue phases only changes when some $\theta_k$ crosses the boundary between $0$ and $2\pi$. For example, when $\theta_k$ crosses from 0 to 2$\pi$, it will become the largest angle (instead of the smallest) and $S$ increments by 1; thus the value of $\theta_k$ after snapping changes smoothly due to the additional $2\pi$ subtracted in step \ref{step:sub}, all other angles are unaffected, and the final permutation is unchanged. A similar argument can be made when angles cross from $2\pi$ to 0.

\begin{figure}
    \centering
    \includegraphics[width=\linewidth]{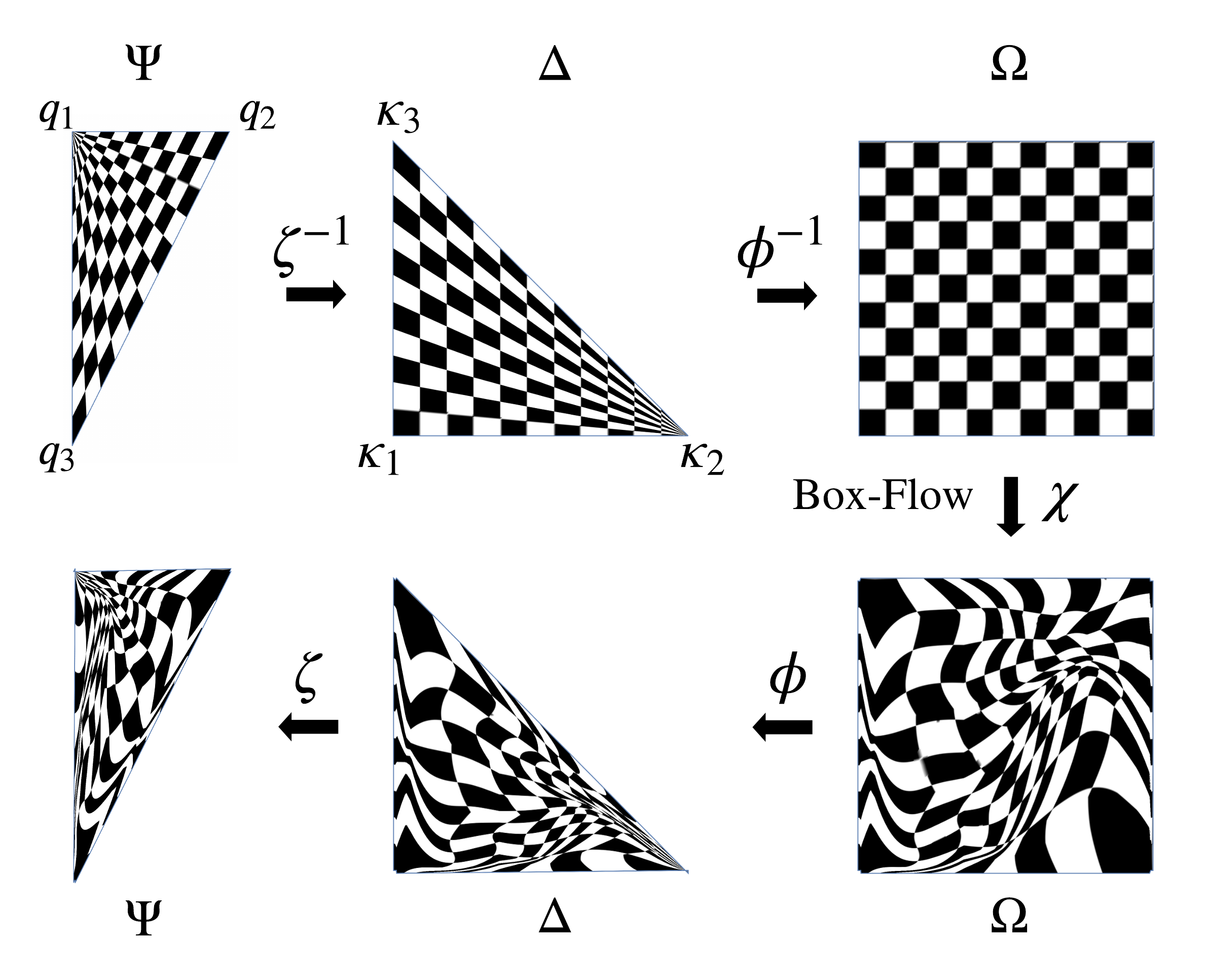}
    \caption{Illustration of the steps we use to apply a flow to an $(N-1)$-simplex, shown for $N=3$ as an example. Starting from an initial density on the simplex $\Psi$, we map it to an axis-aligned simplex $\Delta$ then to an open box $\Omega$. We apply a parametric boundary preserving flow $\boxflow$ to the box and finally invert the chain back to the original coordinate system.
    }
    \label{fig:simplex_flow}
\end{figure}

Finally, we describe the implementation of a flow on $\Psi$. To be invertible, the flow must preserve the boundaries of $\Psi$. We implement such a flow by first using a coordinate transformation to map $\Psi$ to an open box $\Omega = (0,1)^{N-1}$. On this box, an arbitrary boundary-preserving flow $\boxflow: \Omega \rightarrow \Omega$ can easily be applied (e.g.~by using transformations suitable for a finite interval along each axis). Finally, the coordinate transformation can be undone to map back to $\Psi$. It is helpful to further introduce an intermediate $(N-1)$-simplex $\Delta$, which is a right-angled simplex with equal leg lengths. Its vertices are $\{\kappa_1, \ldots, \kappa_N\}$, where $\kappa_1$ is the origin and $[\kappa_{k}]_{j} = \delta_{(k-1) j}~\forall k \in \{2,\ldots,N\}$. The map $\phi: \Omega \rightarrow \Delta$ maps the box $\Omega$ to the simplex $\Delta$ by collapsing one end of the box in each direction,
\begin{align}\label{eq:phi}
\phi_i (\alpha) &= \left\{ \begin{array}{cc}
  \alpha_1 & i = 1\\
  \alpha_i \prod_{j = 1}^{j < i} (1 - \alpha_j) & i > 1,
\end{array} \right.
\end{align}
where $\alpha \in \Omega$. The map $\zeta: \Delta \rightarrow \Psi$ then sends the intermediate right-angled simplex to the canonical simplex by
\begin{align}\label{eq:zeta}
\zeta(\rho) &= y_1 + \rho M,
\end{align}
where $\rho \in \Delta$ and $M$ is the $(N-1)\times N$ matrix defined by $M_{i j} = [y_{i+1}]_j - [y_1]_j$. Both maps are invertible. The inverse map $\phi^{-1}: \Delta \rightarrow \Omega$ is given by
\begin{align}
\phi_i^{-1}(\rho) &= \frac{\rho}{1 - \sum_{j=1}^{i-1} \rho_j},
\end{align}
for $\rho \in \Delta$, while $\zeta^{-1}: \Psi \rightarrow \Delta$ is given by
\begin{align}
\zeta^{-1}(x) &= (x - y_1)M^T (M M^T)^{- 1}.
\end{align}
The entire chain of coordinate transformations, flow, and inverse coordinate transformations is depicted in Fig.~\ref{fig:simplex_flow}.

The Jacobian of the entire flow can be computed by composing the Jacobian factors from each transformation in the chain. While the Jacobian factors acquired from the coordinate transformations are fixed, the flow acting on $\Omega$ is parameterized by, and the resulting density depends on, the action of this inner flow. For example, the inner flow could be a spline flow~\cite{durkan2019neural} constructed to transform each coordinate of $\Omega$ as a function of the model parameters and possibly the other coordinates of $\Omega$. It is this inner flow that must be trained in each coupling layer to reproduce the final density on $\SUn$. A complete listing of the algorithm to apply the matrix conjugation equivariant kernel defined by the above spectral flow is given in Appendix~\ref{sec:canon-alg}.

We implemented this general approach to matrix conjugation equivariant flows on $\SUn$ variables for a range of $N$. For $N \leq 9$, we trained these flows to reproduce target densities defined by Eq.~\eqref{eqn:toy-dists}, with coefficients listed in Table~\ref{tab:toy-params}, and $\beta = 9$. An ESS of greater than $5\%$ was achieved on all target densities, with $c^{(0)}$ performing significantly better with greater than $90\%$ ESS across all densities. Fig.~\ref{fig:toy_sun_densities} compares the flow-based densities to the target densities for $N = 9$. Worse performance on $c^{(1)}$ and $c^{(2)}$ is reflective of their multimodal nature for some values of $\beta / N$. To investigate performance at large $N$, we trained flows to reproduce the $c^{(0)}$ density for $10 \leq N \leq 100$, and found ESSs greater than $90\%$ for all models. All model distributions were confirmed to have exact permutation invariance.

\begin{figure}
    \centering
    \includegraphics{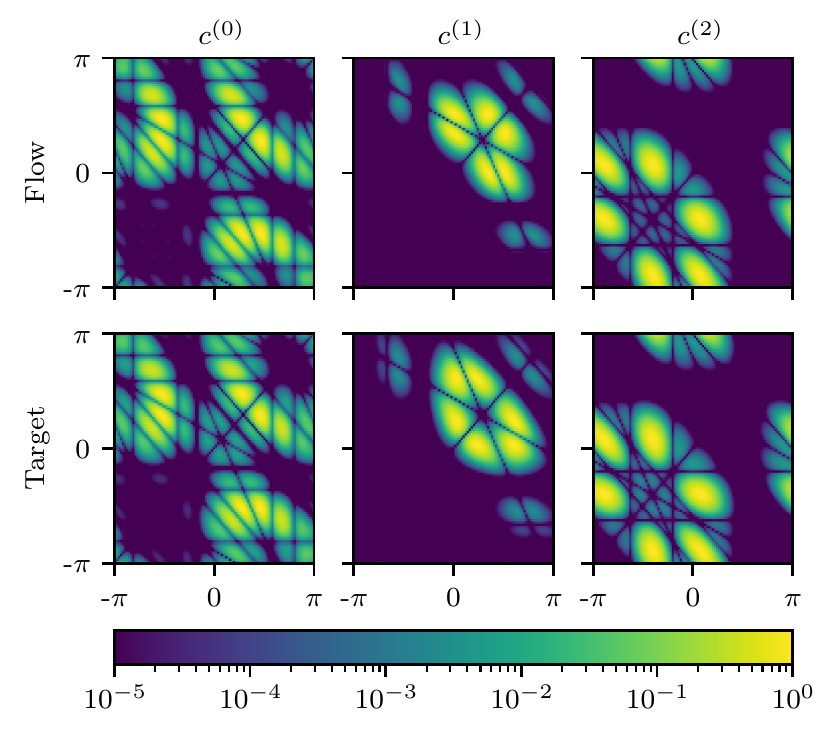}
    \caption{Densities on a two-dimensional slice through the space of $\SU{9}$ eigenvalues defined by varying $\theta_1$ and $\theta_2$, keeping $\theta_3, \dots, \theta_8$ fixed to random values, and assigning $\theta_9 = \textrm{wrap}(-\sum_{k=1}^{8} \theta_k)$. The densities learned by the flow-based models are compared to the target densities for three distributions, each with $\beta = 9$. Horizontal, vertical, and diagonal lines of zero density correspond to locations where the chosen slice crosses through walls of the cells (on which the Haar measure forces the density to zero). Due to exact permutation invariance of the flow-based distribution, these lines are exactly reproduced.}
    \label{fig:toy_sun_densities}
\end{figure}

\section{Application to $\SU{2}$ and $\SU{3}$ lattice gauge theory in 2D} \label{sec:sun-gauge-theory}
With an invertible kernel that is equivariant under matrix conjugation, the methods presented  in Ref.~\cite{Kanwar:2020xzo} immediately allow construction of gauge equivariant coupling layers for $\SUn$ lattice gauge theory. To study the efficacy of such coupling layers for this application, we trained flow-based models to sample from distributions relevant for $1+1$-dimensional gauge theory. Specifically, we considered the distribution defined by the imaginary-time path integral in Eq.~\eqref{eq:vacevalO} with the action given by the Wilson discretization of the continuum gauge action,
\begin{equation}
\begin{gathered}
    S(U) := -\frac{\beta}{N} \sum_{x} \Re \tr \left[ P_{01}(x) \right].
\end{gathered}
\end{equation}
We investigated both $\SU{2}$ and $\SU{3}$ models tuned to approximately equivalent 't Hooft couplings $\lambda = 2 N^2 / \beta$ on $16 \times 16$ periodic lattices. The parameters considered for both $\SU{2}$ and $\SU{3}$ gauge theory are listed in Table~\ref{tab:ens_params}.

\begin{table}
    \centering
    \begin{tabular}{c @{\hskip0.25in} c @{\hskip0.15in} c @{\hskip0.1in} c @{\hskip0.15in} c}
        \toprule
         $\SUn$ & $L$ & $\beta$ & $\lambda = 2N^2 / \beta$ & $n_{\text{dof}}$ \\
         \midrule
         $\SU{2}$ & $16$ & $\{1.8, 2.2, 2.7\}$ & $\{ 4.4, 3.6, 3.0 \}$ & 1536 \\
         $\SU{3}$ & $16$ & $\{4.0, 5.0, 6.0\}$ & $\{ 4.5, 3.6, 3.0 \}$ & 4096 \\
         \bottomrule
    \end{tabular}
    \caption{Choices of parameters on which we investigated the performance of our flow-based sampler. We selected three values of $\beta$ for both $\SU{2}$ and $\SU{3}$ gauge theory, corresponding to approximately equivalent 't Hooft couplings $\lambda$. $n_{\text{dof}} = D L^2 (N^2 - 1)$ indicates the dimensionality of the gauge configuration manifold in each case.}
    \label{tab:ens_params}
\end{table}

In the following subsections, we describe the architecture and training of our flow-based models, confirm the exactness of results using our sampler, and demonstrate that all symmetries are either exactly built into the model or are approximately learned by the model.

\subsection{Model architecture and training} \label{subsec:model-arch-and-training}
In all cases, we constructed flow-based models using a uniform prior distribution $r(V)$, for which configurations in the matrix representation are easily sampled.\footnote{To sample the prior distribution, the method presented in Ref.~\cite{Mezzadri2006} can be used for $\Un$ and can also be modified to fix the determinant to 1 for $\SUn$.}
The invertible function $f$ acting on samples from the prior was composed of $48$ coupling layers $g_1, \dots, g_{48}$. We constructed each coupling layer using the gauge equivariant architecture presented in Sec.~\ref{subsec:gauge-equiv}. Coupling layers specifically acted on plaquettes as the choice of open loops, transforming rows or columns of plaquettes spaced four sites apart on the lattice in each coupling layer, as denoted by $P_{\mu\nu}(x)$ (yellow) in Fig.~\ref{fig:mask_passive_active}; plaquettes that were unaffected by the transformation were used as the gauge invariant inputs to the inner spectral flow, as denoted by $I_1$ and $I_2$ (green) in Fig.~\ref{fig:mask_passive_active}. Coupling layers used an alternating sequence of rotations and a sweep over all possible translations of the transformation pattern to ensure that every link was updated after every eight layers.

The updating pattern that we define here is just one of many possible choices. One could vary the open loops that are updated, change how the links are updated as a function of the open loops, choose a different pattern of translations and rotations between coupling layers, or alter which closed loops are passed as gauge invariant inputs to context functions. The choices made here were sufficient to learn distributions in two-dimensional gauge theory, but in generalizing beyond this proof-of-principle study, in particular to higher spacetime dimensions, these choices must be studied more carefully.

For $\SU{2}$ gauge theory, we implemented the spectral flow itself in a permutation equivariant fashion as described in Sec.~\ref{subsec:su2-flows}. The flow acting on the interval $\theta \in [0, \pi]$ was a spline flow consisting of 4 knots, with the positions of the knots in $[0,\pi]$ computed as a function of the gauge invariant neighboring plaquettes $I_1, I_2, \dots$ using convolutional neural networks with 32 channels in each of 2 hidden layers. The model parameters defining the variational ansatz distribution consisted of the weights in these convolutional neural networks across all coupling layers.

For $\SU{3}$ gauge theory, we implemented the spectral flow as described in Sec.~\ref{subsec:su3-flows}. The inner flow acted on eigenvalues in the canonical triangular cell by changing coordinates to an open box and applying a spline flow in that space, as discussed in Sec.~\ref{subsec:sun-flows}. The spline flow acted on the open box in two steps, transforming the horizontal coordinate first, then the vertical coordinate conditioned on the new horizontal coordinate. The 16 knots of the splines were computed as a function of the gauge invariant neighboring plaquettes, and in the second step as a function of the horizontal coordinate as well. These two functions in each coupling layer were implemented using convolutional neural networks with 32 channels in each of 2 hidden layers. The model parameters defining the variational ansatz distribution thus consisted of the weights in this pair of convolutional neural networks across all coupling layers.

\begin{figure*}
    \centering
    \includegraphics{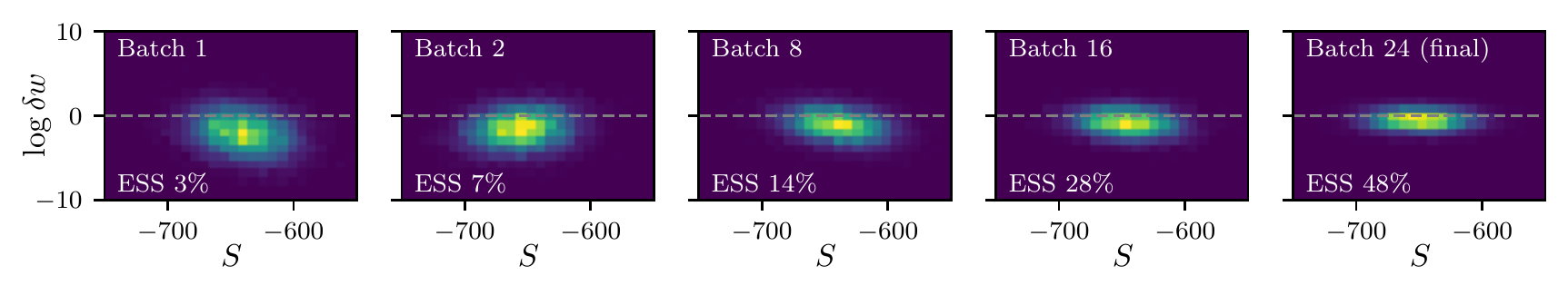}
    \caption{Normalized reweighting factors $\log{\delta w} = -S(U) + \Seff(U) - \log{Z}$ vs action $S$ per configuration across $10,000$ model proposals for $SU(3)$ gauge theory with $\beta = 6.0$ . Reweighting factors are plotted at various points throughout training, reported in terms of the number of batches of size $3072$ that have been used at that point in training.}
    \label{fig:logp_vs_logq}
\end{figure*}

In both cases, the model parameters were optimized using the Adam optimizer. Each optimization step consisted of sampling a batch of size 3072, estimating the modified KL divergence in Eq.~\eqref{eqn:kl-div-shift}, then using the optimizer to update the parameters. During training, we monitored the ESS on each batch to assess model quality. Fig.~\ref{fig:logp_vs_logq} shows how ESS and the spread of reweighting factors evolve over the course of training on a representative model. The final values of ESS for each model are reported in Table~\ref{tab:final_ess}.

\begin{table}
    \centering
    \renewcommand\arraystretch{1.2}
    {\setlength\tabcolsep{5pt}
    \begin{tabular}{l c c c c c c}
    \toprule
    & \multicolumn{3}{c}{$\SU{2}$} & \multicolumn{3}{c}{$\SU{3}$} \\
    \midrule
    $\beta$ & $1.8$ & $2.2$ & $2.7$ & $4.0$ & $5.0$ & $6.0$ \\
    ESS(\%) & 91 & 80 & 56 & 88 & 75 & 48 \\
    \bottomrule
    \end{tabular}}
    \caption{Final values of the ESS for each model trained for $\SU{2}$ and $\SU{3}$ gauge theory.}
    \label{tab:final_ess}
\end{table}

For this proof-of-principle study, we did not perform an extensive search over training hyperparameters. When scaling the method, we expect careful tuning of these hyperparameters and the model architecture can improve the model quality and allow more efficient training. Automatic tuning of hyperparameters, in particular, have been shown to significantly reduce model training costs in other machine learning applications \cite{snoek2012bayesian,BalaprakashDeepHyper,BalaprakashDeepHyper2}.

In general, models defined in terms of convolutional neural networks acting on invariant quantities in a localized region capture the local correlation structure defining the distribution. This local structure is independent of volume as long as finite volume effects are not too large. Thus models can largely be trained on much smaller volumes than the target volume, with very few training steps required at the final volume to correct for any finite volume effects learned by the model.

\begin{figure}
    \centering
    \includegraphics[width=\linewidth]{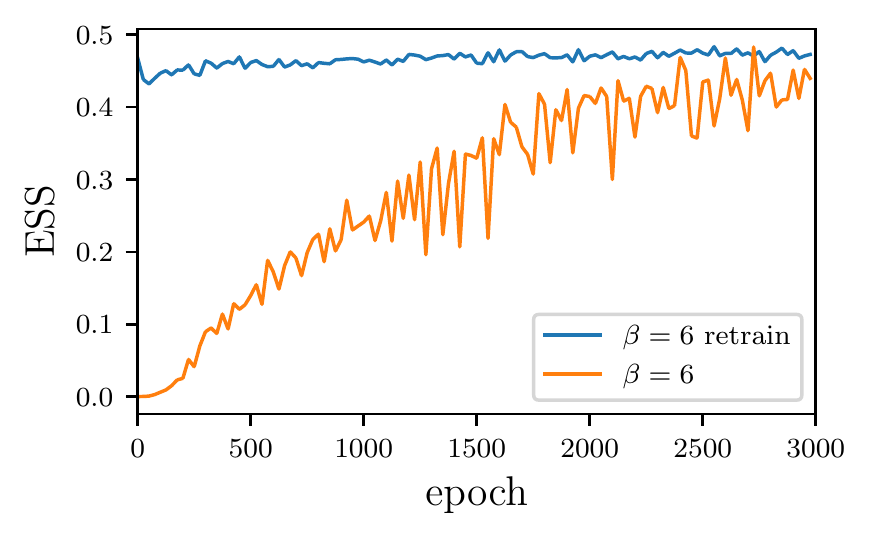}
    \caption{Comparison of training dynamics for a model for $\SU{3}$ gauge theory on a $16\times16$ lattice, when initialized with weights from a model trained on an $8\times8$ geometry, versus the dynamics for an identical model trained from a random initialization. Results are shown for the $\beta=6$ target in $\SU{3}$ gauge theory. The model transferred from the $8\times8$ geometry almost immediately converges to a plateau in model quality, while the model trained from a random initialization requires many training steps to converge to similar quality, despite adjusting the optimization hyperparameters to improve the rate of optimization from a random initialization.}
    \label{fig:train_dynamics}
\end{figure}

\begin{figure}
    \centering
    \includegraphics{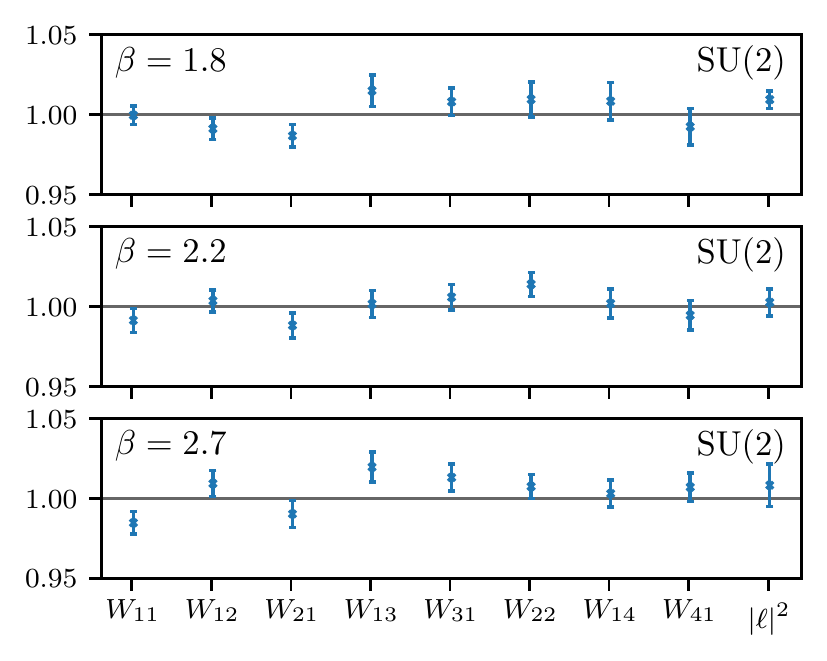} \\
    \includegraphics{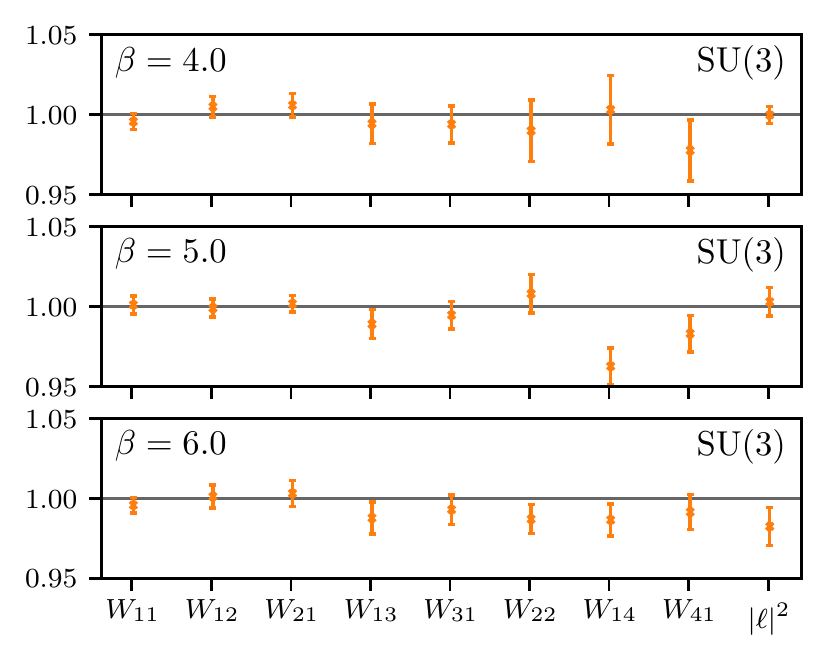}
    \caption{Selection of observables, relative to the true values, computed using the flow-based $\SU{2}$ [top] and $\SU{3}$ [bottom] gauge theory ensembles. Observables were measured on configurations separated by a number of steps equal to the Markov chain autocorrelation time, as determined by the self-consistent estimator presented in Ref.~\cite{Wolff:2003sm}. The autocorrelation time ranged from 1 to 4 for all observables. Per observable, a total number of samples ranging from 20 to 15000 was chosen to give percent-level errors.}
    \label{fig:obs_su2_su3_all}
\end{figure}

The two-dimensional gauge theories used to investigate this model consist entirely of ultralocal dynamics, with any finite volume corrections exponentially small in the number of lattice sites~\cite{sigdel2016}. In our study, we were thus able to train nearly optimal models on much smaller volumes, which enabled significantly more efficient training. For example, Fig.~\ref{fig:train_dynamics} shows that transferring a model that has already learned to capture the local structure of the $\beta=6$, $\SU{3}$ gauge theory on an $8 \times 8$ lattice almost immediately results in an optimized model for the target $16 \times 16$ lattice geometry, whereas it takes many training steps at the large volume to reach similar model quality when beginning training from a randomly initialized model. In any theory with a mass gap $M$, we expect that finite volume effects will be exponentially small in $ML$ when the side length $L$ of the lattice is large enough. Initially training at the smallest value of $L$ in this regime thus provides an efficient approach to training models with larger $L$ since the corrections that must be learned are exponentially small. These gains will be even more significant in higher spacetime dimensions, where the number of lattice sites scales with a larger power of the lattice side length $L$.

\subsection{Observables}
For each model, we constructed a flow-based Markov chain using independent proposals from the model with a Metropolis accept/reject step, as described in Sec.~\ref{subsec:flow-based-review}. Composing proposals into a Markov chain in this way ensures exactness in the limit of infinite statistics.

At finite statistics, it is possible that large correlations between samples at widely separated points in the Markov chain could result in apparent bias due to underestimated errors or insufficient thermalization time. We confirmed that this is not the case by comparing against a variety of analytically-known observables. Specifically, we calculated analytically and measured the expectation values of:
\begin{enumerate}
    \item Wilson loops $W_{ab}$, i.e.~traced loops of links of shape $a \times b$;
    \item Polyakov loops $\ell(x) = \tr \left\{ \prod_t U_0(t,x) \right\}$, winding around the periodic boundary of the lattice;
    \item Two-point functions of Polyakov loops, $\ell^*(x) \ell(y)$.
\end{enumerate}
The expectation value of any Polyakov loop is zero due to an exact center symmetry; this result was reproduced by the flow-based samples (as we discuss below, center symmetry is also exact in our models, therefore this quantity is exact based on model proposals even before composition into a Markov chain). Due to confinement, Wilson loops have an expectation value exponentially small in the area of the loop, thus we considered loops of simple shapes up to area 4 and the Polyakov loop two-point function with zero separation, $|\ell(x)|^2$. The flow-based estimates of these quantities for $\SU{2}$ and $\SU{3}$ gauge theory are shown graphically in Fig.~\ref{fig:obs_su2_su3_all}. The results are statistically consistent with the analytical result.

We further checked that as statistics are increased, estimated errors fall as $1/\sqrt{n}$. This must be true asymptotically, but could be modified if there are correlations longer than the finite Markov chain length. We find that errors are indeed consistent with $1/\sqrt{n}$ scaling, as shown for example in Fig.~\ref{fig:err_scaling} for estimates of $\Re W_{11}$ for $\SU{3}$ gauge theory with $\beta = 6$.

\begin{figure}
    \centering
    \includegraphics{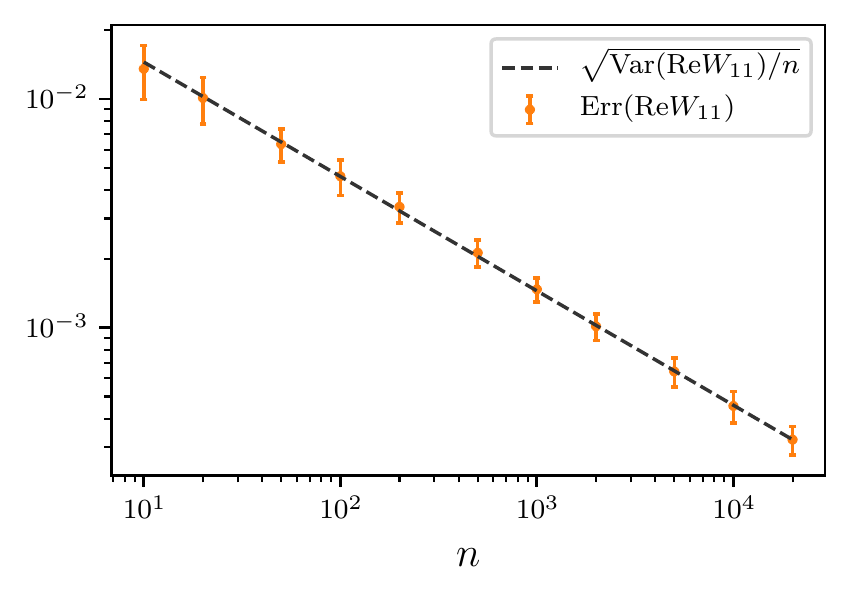}
    \caption{Statistical errors on estimates of $\Re W_{11}$ in $\SU{3}$ gauge theory with $\beta = 6$
    scale as expected as the number of independent samples $n$ is varied. Errors (orange points) are estimated by a bootstrap procedure after thinning the data based on the measured autocorrelation time; the uncertainties on these estimates are measured using an outer bootstrap resampling step. The normalization $\textrm{Var}(\Re W_{11})$ for the theoretical scaling (gray dashed line) is computed using the rightmost measured point.}
    \label{fig:err_scaling}
\end{figure}

\subsection{Symmetries} \label{subsec:application-symms} 
After composition into a Markov chain, flow-based samples are guaranteed to asymptotically reproduce the exact distribution, including all symmetries. However, to reduce Markov chain correlations and improve training efficiency, we constructed our flow-based models to exactly reproduce some symmetries even \emph{when generating proposals}. In terms of the factorization schematically shown in Fig.~\ref{fig:symfact}, exactly imposing symmetries in the model can reduce variance in reweighting factors along the pure-symmetry directions of the distribution.

\begin{figure}
    \centering
    \includegraphics{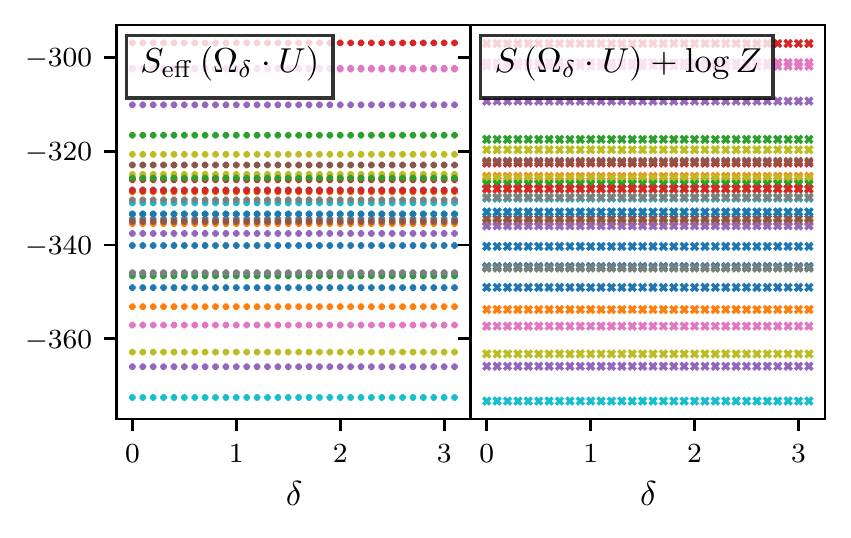}
    \caption{Effective action $\Seff$ vs.~normalized true action $S + \log{Z}$ on a sequence of gauge transformations of 32 gauge configuration samples for $\SU{3}$ gauge theory with $\beta = 6$. The gauge transformation applied is smoothly varied as $\delta$ is increased. Both the flow-based action and true action are exactly invariant to gauge transformations.}
    \label{fig:rugplot-gauge-inv}
\end{figure}

\begin{figure}
    \centering
    \includegraphics{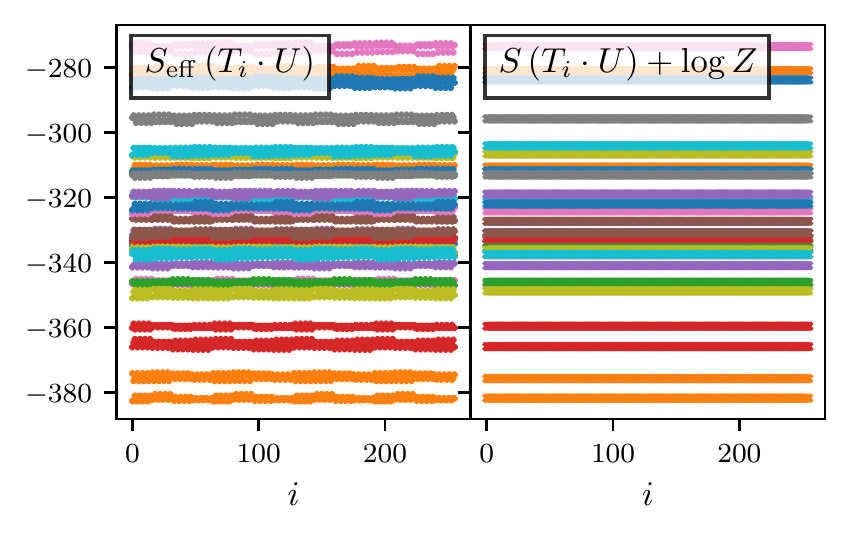}
    \caption{Effective action $\Seff$ vs.~normalized true action $S + \log{Z}$ on a sequence of translations of 32 gauge configuration samples for $\SU{3}$ gauge theory with $\beta = 6$. All $16 \times 16$ translations are plotted in a sequential pattern with index given by $i = \delta y + 16\, \delta x$.}
    \label{fig:rugplot-trans-inv}
\end{figure}

\begin{figure}
    \centering
    \includegraphics{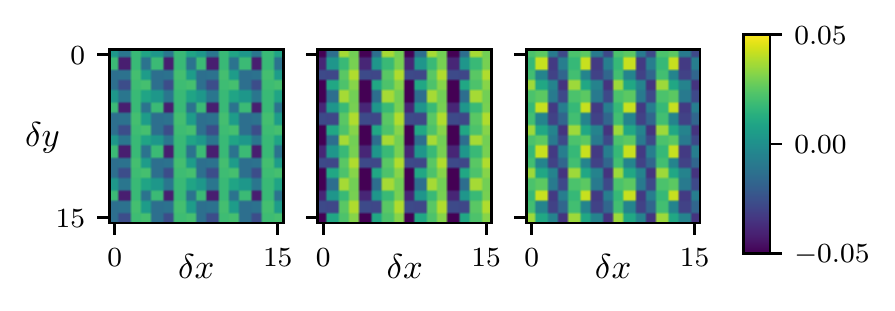}
    \caption{Fluctuations in the flow-based effective action across all possible translations of three random gauge configurations. Configurations are drawn from the model for $\SU{3}$ gauge theory with $\beta = 6$. Fluctuations are reported relative to the mean effective action across all possible translations in each configuration, and are normalized with respect to the standard deviation of the action in the path integral, $\sqrt{\langle \left(S - \langle S \rangle\right)^2 \rangle}$. The action is invariant for shifts $\delta x = 0 \pmod 4, \, \delta y = 0 \pmod 4$ demonstrating the exact translational symmetry modulo $\mathbb{Z}_4 \times \mathbb{Z}_4$.}
    \label{fig:trans-flucts}
\end{figure}

As detailed in Sec.~\ref{subsec:flow-based-symm}, we used coupling layers exactly equivariant to gauge symmetry and translational symmetry modulo a $\mathbb{Z}_4 \times \mathbb{Z}_4$ breaking arising from the size of the tiled pattern. To confirm the exact invariance of the flow-based distribution under gauge transformations, we measured the flow-based effective action over a range of gauge transformations on 32 random configurations along a randomly selected pure-gauge direction. Fig.~\ref{fig:rugplot-gauge-inv} depicts the invariance of both the effective and true actions under this random direction of gauge transformation. The data shown in different colors, corresponding to different random configurations, are approximately aligned in the left and right panels of Fig.~\ref{fig:rugplot-gauge-inv}, indicating that the true action is approximately matched by the effective action in the gauge-invariant directions. We performed a similar investigation of translational invariance by scanning over all $16 \times 16$ possible translations of 32 random configurations. Fig.~\ref{fig:rugplot-trans-inv} shows that there are fluctuations in the flow-based effective action, which arise from symmetry breaking within each $4 \times 4$ tile, but a large subgroup of the translational group is preserved as can be seen by the lines of constant effective action across various translations of each configuration. The spatial structure of the residual fluctuations in the effective action is shown in Fig.~\ref{fig:trans-flucts}. 

We also expect the hypercubic symmetry group to be an exact symmetry in most studies of lattice gauge theories. In the two-dimensional gauge theories under study, this group consists of the 8 possible combinations of rotations and reflections of the lattice. While this symmetry could be imposed in every convolutional neural network used in all coupling layers, the pattern of open loops and the relation of these loops to updated links is difficult to make invariant; choosing a link to ``absorb'' the update of each open loop fundamentally breaks the hypercubic symmetry. On the other hand, this discrete symmetry group has few elements, consisting of only 8 elements in two dimensions, 48 elements in three dimensions, and 384 elements in four dimensions. As such, we instead allowed our flow-based models to learn this small symmetry group over the course of training. Fig.~\ref{fig:rugplot-rr-inv} depicts the approximate invariance of the flow-based effective action on 32 random gauge configurations under all 8 elements of the hypercubic group in 2D.

\begin{figure}
    \centering
    \includegraphics{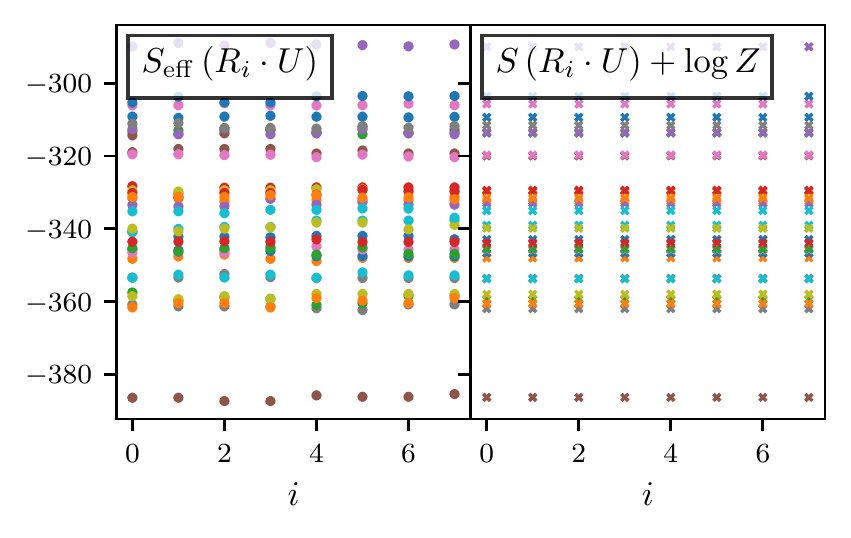}
    \caption{Measured effective action $\Seff$ vs.~normalized true action $S + \log{Z}$ on all 8 possible hypercubic transformations of 32 gauge configurations sampled for $\SU{3}$ gauge theory with $\beta = 6$.}
    \label{fig:rugplot-rr-inv}
\end{figure}

For the pure-gauge theories under consideration, \emph{center symmetry} and \emph{complex conjugation symmetry} are additionally exact symmetries of the action; we included both symmetries explicitly in all of our models. Center symmetry is defined by the transformations
\begin{equation}
    U_0(x) \rightarrow U_0(x) e^{i 2\pi n / N},
    \quad n \in \{0, \dots, N-1\},
\end{equation}
for all links on a fixed time slice, $x_0 = t$, with other links unaffected. Our coupling layers are already equivariant under this symmetry, which can be seen by considering the updated value of any modified link, $U'_\mu(x) = P'_{\mu\nu}(x) P^\dagger_{\mu\nu}(x) U_\mu(x)$: plaquettes do not transform under center symmetry, and by definition center transformations on the link $U_\mu(x)$ are free to commute all the way to the left. If open Polyakov loops were transformed in the coupling layers, or if traced Polyakov loops were used as part of the gauge invariant inputs to any transformation, this property would no longer hold; including terms like these will be necessary for theories in which center symmetry is explicitly broken. We also explicitly constructed our spectral flows to be equivariant under complex conjugation. For $\SU{2}$ matrices, this is equivalent to permutation of the eigenvalues and is therefore immediate. For $\SU{3}$ matrices, it corresponds to a non-trivial mirror symmetry within a single canonical cell. We implemented this mirror symmetry by extending a spline flow from one half of the canonical cell to the entire space using an approach similar to that applied for $\SU{2}$ permutation equivariance. Both center symmetry and complex conjugation symmetry were reproduced to within numerical precision.

\begin{figure}
    \centering
    \includegraphics{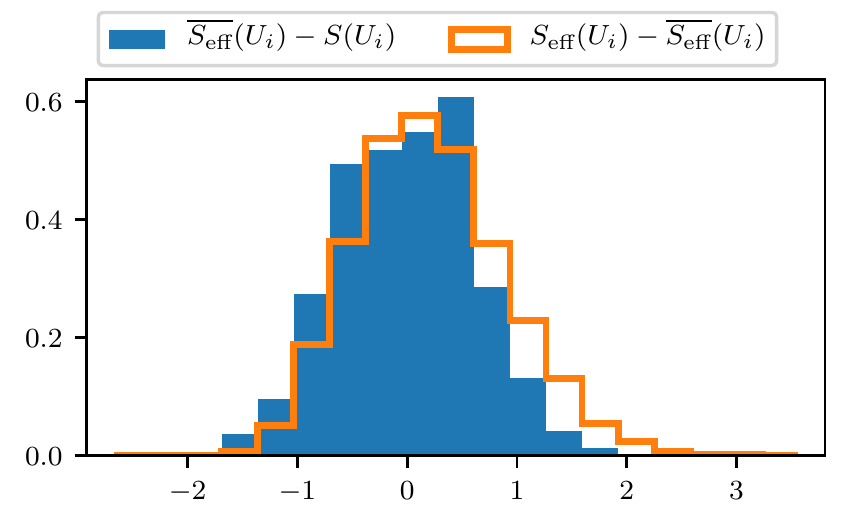}
    \caption{Reweighting factors after post hoc symmetrization (filled blue) vs.~the log difference of the symmetrized effective action from the original effective action (outlined orange). The width of the latter distribution indicates the gains made by averaging over these fluctuations in Eq.~\eqref{eqn:avg-seff}. The width of the former distribution indicates the remaining errors in the model.}
    \label{fig:trans-hist}
\end{figure}

Finally, we considered explicitly symmetrizing model proposals under a discrete symmetry group $H$. Such an approach could be used, for example, to impose the residual $\mathbb{Z}_4 \times \mathbb{Z}_4$ translational symmetry or hypercubic symmetry on the flow-based model post hoc. To do so, a random symmetry transformation is applied after drawing a model proposal and the averaged model weight,
\begin{equation} \label{eqn:avg-seff}
    \overline{\Seff}(U) := -\log\left( \frac{1}{|H|} \sum_{h \in H} e^{-\Seff(h \cdot U)} \right),
\end{equation}
is reported. This averaging over all possible symmetry transformations is required to faithfully report the probability density of the output sample for reweighting or composition into a flow-based Markov chain. It is also very costly if the symmetry group is large (and is intractable for continuous symmetry groups).

We studied the possibility of employing such averaging for the residual $\mathbb{Z}_4 \times \mathbb{Z}_4$ translational symmetry breaking. Fig.~\ref{fig:trans-hist} compares the reweighting factors required for the translationally symmetrized model vs.~the fluctuations that have been averaged over by the sum in Eq.~\eqref{eqn:avg-seff}. The comparable width of these histograms indicates that the improvement in the spread of reweighting factors (which controls the variance of estimators) is $O(1)$; evaluating the ESS directly, we found in this example that the ESS was increased by roughly a factor of two. Thus, the additional factor of 16 in cost required to generate the symmetrized proposals outweighed the variance reduction benefits. We conclude that it is beneficial to impose symmetries when possible in the flow-based model itself, as we did with gauge symmetry, center symmetry, conjugation symmetry, and a large subgroup of translational symmetry, but in our application we found it counterproductive to impose a residual symmetry by brute force averaging of proposals.

\section{Outlook}

It has recently been shown in proof-of-principle work that the challenging computational task of sampling field configurations for lattice gauge theory may be accelerated by orders of magnitude compared with more traditional sampling approaches through the use of flow-based models~\cite{Albergo:2019eim,Kanwar:2020xzo}. 
In other lattice field theories, it has been demonstrated that these models can also be used to estimate observables, such as the absolute value of the free energy, that are difficult to estimate with existing MCMC methods~\cite{Nicoli:2020njz}.

Here, we present a definitive step towards more efficient sampling for lattice gauge theories by 
developing flow-based models that are equivariant under $\SUn$ gauge symmetries,
thus enabling the construction of model architectures that respect the symmetries of the Standard Model of particle and nuclear physics and other physical theories.
We demonstrate the application of this approach to sampling both single $\SUn$ variables and $\SU{2}$ and $\SU{3}$ lattice gauge theory configurations, showing that observables computed using samples from flow-based models are correct within uncertainties and have the predicted statistical scaling with an increasing number of samples.

In the proof-of-principle implementation presented here, we have not attempted to optimize the model architecture and training approaches for expressivity or efficiency. State-of-the-art calculations will likely require further development in these directions. For one, alternative patterns of loops to update in each coupling layer could increase expressivity of the model, and we expect that exploring these choices will have significant impact in higher dimensions, where the degree of connectivity between links and loops is higher. Second, studies of whether the kernels and coupling layers that we constructed can generalize to multimodal distributions will help to understand the ability of these models to capture distributions in broken symmetry phases of lattice gauge theories. Third, investigation of hyperparameter tuning and further ways to exploit existing models for training and model initialization could allow more efficient training and improve model quality. Finally, studying the scaling of model complexity required to take the continuum limit will determine the viability of this approach on the fine-grained lattices employed in state-of-the-art lattice field theory calculations.
Due to locality, keeping the variance of reweighting factors or the flow-based Markov chain rejection rate fixed while we increase the physical volume of the lattice will require improving the model's approximation of the local correlation structure of the theory;\footnote{Instead keeping the model architecture \emph{fixed} while increasing the physical volume results in exponential degradation of the variance of reweighting factors or the flow-based Markov chain rejection rate.}
however, it is not clear how model complexity requirements scale when physical volume is held fixed and the lattice spacing is decreased. This scaling depends on the dynamics of the theory and the architecture of the flow-based model under study, and it must be determined experimentally. If these challenges can be addressed, the extension of these proof-of-principle results to state-of-the-art lattice gauge theory calculations for complex theories such as Quantum Chromodynamics has the potential to redefine the computational limits, and hence the impact, of lattice gauge theory in the coming exascale computing era~\cite{Joo:2019byq}.

\section{Acknowledgements}

The authors thank Jiunn-Wei Chen, Will Detmold, Andreas Kronfeld, George Papamakarios, Andrew Pochinsky, and Peter Wirnsberger for helpful discussions. GK, DB, DCH, and PES are supported
in part by the U.S. Department of Energy, Office of Science, Office of Nuclear Physics, under grant Contract
Number DE-SC0011090. PES is additionally supported by
the National Science Foundation under CAREER Award
1841699 and under EAGER grant 2035015, by the U.S. DOE Early Career Award DE-SC0021006, by a NEC research award, and by the Carl G and Shirley Sontheimer Research Fund. KC is supported by the National Science Foundation under the awards ACI-1450310, OAC-1836650,
and OAC-1841471 and by the Moore-Sloan data science
environment at NYU. MSA is supported by the Carl
Feinberg Fellowship in Theoretical Physics. This work
is associated with an ALCF Aurora Early Science Program project, and used resources of the Argonne Leadership Computing Facility, which is a DOE Office of Science User Facility supported under Contract DE-AC02-06CH11357. Some figures were produced using \texttt{matplotlib}~\cite{Hunter:2007} and \texttt{Mathematica}~\cite{Mathematica}.

\bibliographystyle{apsrev4-1}
\bibliography{main}

\appendix
\section{Proof that equivariance under matrix conjugation can be represented as equivariance under eigenvalue permutation} \label{sec:equiv-conj-perm}
Let $G$ be a compact connected Lie group, such as $\SUn$.
We are interested in characterising the group of diffeomorphisms of $G$ that are equivariant under the action by matrix conjugation. Such a diffeomorphism $f:G\rightarrow G$ satisfies $f(XWX^{-1})=Xf(W)X^{-1}$ for any $W, X\in G$.
Our aim is to show that all such diffeomorphisms are extensions to $G$ of diffeomorphisms of a maximal torus that are equivariant under the action of the Weyl group.

Let $T$ be a maximal torus in $G$.
Recall this torus is equal to its own centraliser $Z(T)=\left\{X\in G|~ XDX^{-1}=D,\forall D\in T\right\}$. 
The Weyl group of $G$ is a finite group equal to $N(T)/T$, where $N(T)=\left\{X\in G|~XDX^{-1}\in T,\forall D\in T\right\}$ is the normalizer of $T$.

In the case of $G=\SUn$ or $G=\Un$, a maximal torus is given by the subgroup of diagonal matrices, and the Weyl group is isomorphic to the group of permutations on $N$ elements acting on $T$ by permuting the elements on the diagonal. For a permutation $\sigma$, a representative matrix in $N(T)$ is given by a permutation matrix, with potentially some elements set to $-1$ instead of $1$ in order for the determinant to be $1$ in the case of $\SUn$.

We start with the easy direction, where we restrict a diffeomorphism from $G$ to $T$.
\begin{prop}
Let $f:G\rightarrow G$ be a matrix conjugation equivariant diffeomorphism. Then $f$ restricted to $T$ is a diffeomorphism of $T$ that is equivariant under the action of the Weyl group.
\end{prop}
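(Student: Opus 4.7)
The plan is to exploit the fact that a maximal torus is self-centralizing, i.e.\ $Z(T)=T$, together with the matrix conjugation equivariance of $f$. There are three things to prove: that $f$ sends $T$ into $T$, that the restriction $f|_T$ is a diffeomorphism of $T$, and that it intertwines the Weyl group action.

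First I would show $f(T)\subseteq T$. Fix $D\in T$. Since $T$ is abelian, for every $X\in T$ we have $XDX^{-1}=D$, so equivariance of $f$ yields
\begin{equation}
f(D)=f(XDX^{-1})=Xf(D)X^{-1}\quad\text{for all }X\in T.
\end{equation}
Hence $f(D)$ commutes with every element of $T$, i.e.\ $f(D)\in Z(T)$. Using the standard fact $Z(T)=T$ for a maximal torus in a compact connected Lie group, we conclude $f(D)\in T$.

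Second, I would upgrade this to the statement that $f|_T$ is a diffeomorphism of $T$. The inverse $f^{-1}$ is itself matrix conjugation equivariant: applying $f^{-1}$ to both sides of $f(XWX^{-1})=Xf(W)X^{-1}$ and substituting $V=f(W)$ gives $f^{-1}(XVX^{-1})=Xf^{-1}(V)X^{-1}$. Therefore the first step applies to $f^{-1}$ as well and yields $f^{-1}(T)\subseteq T$, so $f|_T:T\to T$ is a bijection. Smoothness of $f|_T$ and $(f|_T)^{-1}=f^{-1}|_T$ follows from the fact that $T$ is an embedded submanifold of $G$ and $f,f^{-1}$ are smooth on $G$.

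Third, Weyl equivariance is essentially free once the first two steps are in place. For $w\in N(T)/T$, choose a representative $X\in N(T)$; then the Weyl action on $D\in T$ is $w\cdot D = XDX^{-1}\in T$, and matrix conjugation equivariance of $f$ gives $f(w\cdot D)=Xf(D)X^{-1}=w\cdot f(D)$. The expression is independent of the chosen representative because two representatives differ by an element of $T$, which acts trivially on $T$ by conjugation.

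The only substantive ingredient is the identification $Z(T)=T$ for a maximal torus in a compact connected Lie group; the rest is bookkeeping. So the main conceptual obstacle, such as it is, lies in this first step, and everything else is formal.
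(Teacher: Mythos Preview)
Your proof is correct and follows the same overall strategy as the paper: use $Z(T)=T$ to get $f(T)\subseteq T$, then argue $f|_T$ is a diffeomorphism of $T$, then read off Weyl equivariance from conjugation equivariance by elements of $N(T)$. Steps one and three are essentially identical to the paper's.

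The one genuine difference is in step two. The paper argues that $f|_T$ is a diffeomorphism onto its image, and that this image is both closed (compactness) and open (local diffeomorphism, since $d(f|_T)$ is injective between tangent spaces of equal dimension) in the connected space $T$, hence all of $T$. You instead observe that $f^{-1}$ is also conjugation equivariant, so step one applied to $f^{-1}$ gives $f^{-1}(T)\subseteq T$ directly, forcing $f|_T$ to be a bijection with smooth inverse $f^{-1}|_T$. Your route is a bit more elementary in that it avoids the open--closed connectedness argument and any appeal to the rank of $d(f|_T)$; the paper's route, on the other hand, does not need to check that $f^{-1}$ inherits equivariance. Both are short and entirely standard.
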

\begin{proof}
First, let's show that $f(T)\subset T$. Let $D\in T$. For any $X\in T$, we have $XDX^{-1}=D$ since $T$ is commutative. By equivariance of $f$, we also have $f(XDX^{-1})=Xf(D)X^{-1}$. We deduce that $Xf(D)X^{-1}=f(D)$ for all $X\in T$, which means that $f(D)$ is in the centraliser of $T$. Since this is equal to $T$ for a maximal torus, we have proved $f(D)\in T$.

Since $f$ is a diffeomorphism, its restriction to $T$ is also a diffeomorphism on its image. This image will be both closed and open in $T$, and is therefore the whole of $T$.

Finally, the fact that $f$ restricted to $T$ is equivariant under the action of the Weyl group is immediate, since this action comes from the action by conjugation from $N(T)$.
\end{proof}

For the opposite direction, we restrict ourselves to the cases $G=\SUn$ and $G=\Un$. We choose $T$ to be the subgroup of diagonal matrices.
The Weyl group acts by permutation on the diagonal elements in $T$.

In what follows, we will assume $f:T\rightarrow T$ is a diffeomorphism that is equivariant under the action of the Weyl group.
We first introduce a Lemma that will be used later.
\begin{lemma}\label{lemma:F_commute}
Let $D\in T$. Assume $A\in G$ commutes with $D$, then $A$ also commutes with $f(D)$.
\end{lemma}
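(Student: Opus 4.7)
The plan is to prove the lemma by establishing the centralizer containment $Z(D) \subseteq Z(f(D))$ in $G$, where $Z(X)$ denotes the centralizer of $X$. The key observation I will use is the following characterization of the centralizer of a diagonal matrix: for $D \in T$ with $G = \SUn$ or $G = \Un$, a matrix $A \in G$ satisfies $AD = DA$ if and only if $A_{ij} = 0$ for every pair $i, j$ with $D_{ii} \neq D_{jj}$. Equivalently, $A$ preserves each eigenspace of $D$, so $A$ respects the partition of indices $\{1, \ldots, N\}$ into equal-eigenvalue classes.

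First, I would note that $f(D) \in T$ since $f$ maps $T$ to $T$ by assumption, so $f(D)$ is also diagonal and the same characterization of its centralizer applies. The heart of the argument is then to show that $f(D)$ has at least as coarse an equality pattern on its diagonal as $D$, i.e.~whenever $D_{ii} = D_{jj}$, we have $f(D)_{ii} = f(D)_{jj}$. For this I would use Weyl equivariance applied to the transposition $\tau_{ij}$ swapping the $i$-th and $j$-th entries. The transposition $\tau_{ij}$ belongs to the Weyl group and lifts to some $P \in N(T)$ (in the $\SUn$ case one chooses a signed permutation matrix with determinant $1$, but for the conjugation action on $T$ the signs are irrelevant since they cancel). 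Since $D_{ii} = D_{jj}$, we have $PDP^{-1} = D$, and Weyl equivariance of $f$ then yields $Pf(D)P^{-1} = f(PDP^{-1}) = f(D)$, which forces $f(D)_{ii} = f(D)_{jj}$.

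With the equality pattern inclusion established, the lemma follows by contrapositive applied to the centralizer characterization: if $f(D)_{ii} \neq f(D)_{jj}$, then $D_{ii} \neq D_{jj}$, hence $A_{ij} = 0$ for any $A \in Z(D)$, and therefore $A \in Z(f(D))$. I do not anticipate a significant obstacle beyond making the transposition-lift argument precise in the $\SUn$ case, where one must verify that the signs chosen to enforce $\det = 1$ do not affect the conjugation action on diagonal matrices; this is immediate once written out, since conjugation by a signed permutation matrix produces the same permutation of diagonal entries as conjugation by the unsigned permutation matrix.
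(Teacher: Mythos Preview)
Your proposal is correct and follows essentially the same approach as the paper: both arguments first use Weyl equivariance with respect to a transposition to show that $D_{ii}=D_{jj}$ implies $f(D)_{ii}=f(D)_{jj}$, and then deduce the centralizer containment from the standard characterization that commuting with a diagonal matrix is equivalent to preserving its eigenspaces. The only cosmetic difference is that the paper conjugates by a permutation to display $D$ in block form before invoking the eigenspace-preservation fact, whereas you apply the entrywise criterion directly in the original ordering; the underlying logic is identical.
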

\begin{proof}
Let $i,j$ be distinct indices in the range $1\ldots N$. Assume that $D_{ii}=D_{jj}$. We will first prove that $f(D)_{ii}=f(D)_{jj}$.
Let $P\in\SUn$ be given by $P_{ij}=1,~P_{ji}=-1,~P_{ii}=P_{jj}=0$, and $P_{kk}=1$ for all $k\neq i,j$,
then $PDP^{-1}=D$. Since $P\in N(T)$, we have $Pf(D)P^{-1}=f(PDP^{-1})=f(D)$ and $P$ commutes with $f(D)$. This means that $f(D)_{ii}=f(D)_{jj}$.

Let $\lambda_1,\ldots,\lambda_m$ be the $m$ distinct eigenvalues of $D$, with respective multiplicity $n_1,\ldots,n_m$. There exists $P$ in $N(T)$ such that

\begin{equation}
PDP^{-1} = \begin{pmatrix}
\lambda_1 I_{n_1} &  & 0 \\
 & \cdot &  \\
0 &  & \lambda_m I_{n_m}
\end{pmatrix}    
\end{equation}
where $I_{n_k}$ is an identity matrix of size $n_k$. This means that $f(PDP^{-1})$ must also be of the form
\begin{equation}
f(PDP^{-1}) = \begin{pmatrix}
\mu_1 I_{n_1} &  & 0 \\
 & \cdot &  \\
0 &  & \mu_m I_{n_m}
\end{pmatrix}    
\end{equation}

Since $A$ commutes with $D$, we have that $PAP^{-1}$ commutes with $PDP^{-1}$. Since matrices that commute must preserve each others eigenspaces,
this implies that $PAP^{-1}$ must have the form
\begin{equation}
PAP^{-1} = \begin{pmatrix}
U_1 & & 0 \\
 & . & \\
0 & & U_n
\end{pmatrix}
\end{equation}
Given the form of $Pf(D)P^{-1}=f(PDP^{-1})$ shown above, we conclude that $PAP^{-1}$ commutes with $Pf(D)P^{-1}$, therefore $A$ commutes with $f(D)$.
\end{proof}

Finally, using Lemma~\ref{lemma:F_commute}, we can prove our main result.
\begin{prop}\label{prop:F_well_defined}
Assume $W\in G$ has two different decompositions $W=XDX^{-1}=YEY^{-1}$, where $D$ and $E$ are diagonal matrices. Then
\begin{equation}\label{eq:F_well_defined}
    Xf(D)X^{-1} = Yf(E)Y^{-1}.
\end{equation}
\end{prop}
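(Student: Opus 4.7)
My plan is to reduce the statement to an application of Lemma~\ref{lemma:F_commute} via a Weyl-group rearrangement. From the hypothesis $XDX^{-1}=YEY^{-1}$, the diagonal matrices $D$ and $E$ are conjugate in $G$, hence have the same multiset of eigenvalues. In $G=\SUn$ (or $\Un$) the Weyl group acts on $T$ by arbitrary permutations of the diagonal, so I can pick a representative $P\in N(T)$ of a permutation $\sigma$ for which $E=PDP^{-1}$. By equivariance of $f|_T$ under the Weyl group, this immediately gives
\begin{equation}
f(E) = f(PDP^{-1}) = P f(D) P^{-1}.
\end{equation}

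Substituting into the right-hand side of \eqref{eq:F_well_defined},
\begin{equation}
Yf(E)Y^{-1} = (YP)\,f(D)\,(YP)^{-1}.
\end{equation}
So the claim reduces to showing that $B:=X^{-1}(YP)$ commutes with $f(D)$. The key observation is that $B$ already commutes with $D$: from $XDX^{-1}=YEY^{-1}=(YP)D(YP)^{-1}$ we get $(X^{-1}YP)D(X^{-1}YP)^{-1}=D$, i.e.\ $BDB^{-1}=D$. Applying Lemma~\ref{lemma:F_commute} (with $A=B$) then yields $Bf(D)B^{-1}=f(D)$, which rearranges precisely to $Xf(D)X^{-1}=YPf(D)P^{-1}Y^{-1}=Yf(E)Y^{-1}$.

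The main obstacle to watch is step one: verifying that a Weyl-group representative $P$ realising the required permutation actually lies in $N(T)\subset G$. For $\Un$ every permutation matrix already sits in $\Un$, so this is automatic; for $\SUn$ one has to adjust a single sign in the permutation matrix to force $\det P=1$, which the preamble to the proposition already notes. With that caveat handled, everything else is a direct chain of substitutions, and the nontrivial analytic content sits entirely in Lemma~\ref{lemma:F_commute} which has already been proved. The argument makes no use of specific structure of $f$ beyond Weyl equivariance on $T$ and the conclusion of the lemma, so it applies uniformly across $\SUn$ and $\Un$.
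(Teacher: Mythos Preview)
Your proof is correct and follows essentially the same route as the paper: choose $P\in N(T)$ with $E=PDP^{-1}$, use Weyl equivariance to get $f(E)=Pf(D)P^{-1}$, reduce to showing $X^{-1}YP$ commutes with $f(D)$, verify it commutes with $D$, and invoke Lemma~\ref{lemma:F_commute}. Your extra remark justifying that such a $P$ actually lies in $N(T)$ for both $\SUn$ and $\Un$ is a welcome clarification that the paper leaves implicit.
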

\begin{proof}
There exists $P$ in $N(T)$ such that $E=PDP^{-1}$, which implies $f(E)=Pf(D)P^{-1}$. This means \Cref{eq:F_well_defined} is equivalent to
\begin{equation}\label{eq:F_well_defined_D}
    Xf(D)X^{-1} = Zf(D)Z^{-1},
\end{equation}
where $Z=YP$. The above equation is equivalent to saying that $X^{-1}Z$ commutes with $f(D)$, and by \Cref{lemma:F_commute} this will be the case if $X^{-1}Z$ commutes with $D$. This is easy to prove:
\begin{equation}
\begin{split}
    X^{-1}ZDZ^{-1}X & = X^{-1}YPDP^{-1}Y^{-1}X \\
    & = X^{-1}YEY^{-1}X\\
    & = X^{-1}WX \\
    & = D
\end{split}
\end{equation}
\end{proof}

\begin{ex}
In the case of $G=\SU{2}$, the maximal torus is isomorphic to $\U{1}$, the Weyl group has size $2$ and its only non-trivial element transforms
$\left(\begin{smallmatrix} \lambda & 0 \\ 0 & \overline{\lambda} \end{smallmatrix}\right)$
to
$\left(\begin{smallmatrix} \overline{\lambda} & 0 \\ 0 & \lambda \end{smallmatrix}\right)$, thus any bijection $f:\U{1}\rightarrow \U{1}$ that satisfies $f(\overline{z})=\overline{f(z)}$ defines an equivariant bijection of $\SU{2}$.
\end{ex}

According to proposition \Cref{prop:F_well_defined}, any matrix conjugation equivariant function on $T$ can be extended to an equivariant function on $G$.
If the function was invertible on $T$, then it is easy to see that it will also be invertible on $G$.

\section{Details of permutation equivariance of $\SUn$ spectral flows} \label{sec:canon-alg}

\subsection{Proof that Eq.~\ref{eq:canon-cell-vert} defines a cell}
\label{sec:cell-alg-proof}

We demonstrate that the vertices from Eq.~\eqref{eq:canon-cell-vert} define an $(N-1)$-simplex $\Psi$ corresponding to a cell $\mathcal{C}$. In practice, this means showing that any point on the boundary of $\Psi$ maps to a point in $\mathcal{C}$ with repeated eigenvalues, while any point in the interior of $\Psi$ maps to a regular point, i.e. one without repeated eigenvalues.

\begin{prop}
\label{prop:canonchamber_v2}
 The vertices $y_1,\ldots,y_N$ from Eq.~\eqref{eq:canon-cell-vert} define an $(N-1)$-simplex $\Psi$ that maps to a cell $\mathcal{C}=\mathrm{exp}(\Psi)$ in the maximal torus.
\end{prop}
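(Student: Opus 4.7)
Proof proposal. The plan is to parametrize a point of $\Psi$ by its barycentric coordinates $t_1,\ldots,t_N$ (with $t_k\geq 0$ and $\sum_k t_k = 1$), compute the resulting eigenvalue phases explicitly, and then read off regularity directly. First, I would verify that the $y_k$ actually lie in $\torusalg$ (the sum of their components vanishes by a telescoping computation) and that the $N$ points are affinely independent, so that $\Psi$ is an honest $(N-1)$-simplex of the correct dimension. For $\theta = \sum_k t_k y_k \in \Psi$, a short computation using $[y_k]_j = 2\pi(k/N - \delta_{k\geq j})$ gives
\begin{equation}
  \theta_{j+1} - \theta_j \;\propto\; t_j \quad (1 \leq j \leq N-1),
  \qquad
  \theta_N - \theta_1 \;\propto\; (1 - t_N),
\end{equation}
so the angles $\theta_j$ are weakly ordered, with consecutive gaps controlled by the barycentric coordinates $t_1,\ldots,t_{N-1}$ and the wrap-around gap controlled by $t_N$.

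Second, I would use these formulas to classify interior versus boundary behavior. In the interior of $\Psi$, every $t_k$ is strictly positive, so the $\theta_j$ are strictly increasing and the total span $\theta_N - \theta_1$ is strictly less than $2\pi$; consequently the eigenvalues $e^{i\theta_j}$ (up to the convention's factor of $2\pi$) are pairwise distinct, i.e.\ $\exp(\theta)$ is a regular point of $T$. On the boundary of $\Psi$, at least one $t_k$ vanishes: if $t_j = 0$ for $j<N$ then $\theta_j = \theta_{j+1}$, while if $t_N = 0$ then $\theta_N - \theta_1$ equals the full period and the first and last eigenvalues coincide. Either way, the image is a non-regular element of $T$. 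Thus $\exp(\Psi^\circ)$ lies in the regular set and $\exp(\partial\Psi)$ lies in the wall set.

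Third — and this is the delicate step — I would upgrade ``contained in a component'' to ``equal to a component''. The map $\exp$ is a local diffeomorphism on $\torusalg$, so it suffices to show it is injective on $\Psi^\circ$: if $\exp(\theta) = \exp(\theta')$ with $\theta, \theta' \in \Psi^\circ$, then $\theta - \theta'$ lies in the integer kernel of $\exp$ restricted to $\torusalg$; but the bounds $0 < \theta_{j+1}-\theta_j$ and $\theta_N - \theta_1 < 2\pi$ (and analogously for $\theta'$) force all components of $\theta - \theta'$ to agree, hence to be zero by the hyperplane constraint. Injectivity plus the local diffeomorphism property makes $\exp(\Psi^\circ)$ an open, connected subset of the regular locus, so it is contained in exactly one connected component $C$. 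If $C \setminus \exp(\Psi^\circ)$ were nonempty, a path inside $C$ from a point of $\exp(\Psi^\circ)$ to a point outside would have to cross the topological boundary of $\exp(\Psi^\circ)$ in $T$, which is contained in $\exp(\partial\Psi)$ and hence in the wall set — contradicting the fact that the path stays in $C$. Therefore $\exp(\Psi^\circ) = C$, and $\exp(\Psi) = \overline{C}$ is a cell.

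I expect the main obstacle to be precisely this last step: the combinatorial computation of $\theta_{j+1}-\theta_j$ and the regular/singular dichotomy are essentially bookkeeping, but the argument that $\exp(\Psi^\circ)$ fills out an \emph{entire} component (rather than just landing inside one) requires combining the injectivity estimate with the boundary-versus-wall observation. A convenient alternative, if one wants to avoid the topological argument, is to note that the Weyl group acts freely and transitively on the $N!$ cells, so once one has produced a single candidate cell of the right dimension with regular interior and singular boundary, the orbit count forces it to be a full cell.
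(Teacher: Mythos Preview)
Your proposal is correct, and its computational core --- parametrizing $\theta\in\Psi$ by barycentric weights $t_k$ and reading off $\theta_{j+1}-\theta_j = 2\pi t_j$ for $j<N$ and $\theta_N-\theta_1 = 2\pi(1-t_N)$ --- is exactly what the paper does (with the weights named $\gamma_k$). The paper, however, stops after establishing the dichotomy ``interior $\to$ regular, boundary $\to$ singular'' and declares this to be what ``$\exp(\Psi)$ is a cell'' means in practice; it does not verify affine independence of the $y_k$, injectivity of $\exp$ on $\Psi^\circ$, or that $\exp(\Psi^\circ)$ fills an \emph{entire} connected component of the regular locus.

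Your third step --- injectivity from the gap bounds forcing all integer shifts to coincide and hence vanish, followed by the path argument that any escape from $\exp(\Psi^\circ)$ inside $C$ must cross a point of $\exp(\partial\Psi)\subset\{\text{walls}\}$ --- supplies exactly the surjectivity onto a component that the paper leaves implicit. So your proof is strictly more complete than the paper's, at the cost of the extra topological bookkeeping. The alternative orbit-count shortcut you sketch at the end is closer in spirit to what the paper tacitly assumes from the standard theory of alcoves and the affine Weyl group.
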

\begin{proof}
Let $\theta$ be a point in $\Psi$, the convex hull of $y_1,\ldots,y_N$, given by
\[ \theta_j = 2 \pi \sum_k \gamma_k \left( \frac{k}{N} -
    \delta_{k \geqslant j} \right), \]
where $\gamma_k \geqslant 0$ and $\sum_k \gamma_k = 1$. The boundary $\partial \Psi$ is the simplicial complex formed by all points $\theta$ such that at least one $\gamma_k$ is zero.

We consider the difference between two points $\theta_i$ and $\theta_j$, for $j > i$,
\begin{align}
 \theta_j - \theta_i &= 2 \pi \sum_k \gamma_k \left( \delta_{k \geqslant
   i} - \delta_{k
   \geqslant j} \right)\\ 
   &= 2 \pi \sum_{k = i}^{j - 1}
   \gamma_k .\label{eq:thetai_minus_thetaj}
\end{align}
If $\gamma_k = 0$ for some $k$ such that $1 \leqslant k < N$, then $\theta_{k+1} - \theta_{k} = 0$ and $\textrm{exp}(\theta)$ has a repeated eigenvalue. If
$\gamma_N = 0$, we have that $\theta_N - \theta_1 = 2 \pi\sum_{k = 1}^{N - 1} \gamma_k$, but since $\gamma_N = 0$ we have $\sum_{k = 1}^{N- 1} \gamma_k = 1$. Thus
$
 \theta_N - \theta_1 = 2 \pi
$.
This shows that a point in $\partial\Psi$ is exponentiated to a point with repeated eigenvalues.

Finally, we need to show that a point in the interior of $\Psi$ is exponentiated to a regular point. Such a point corresponds to $\gamma_k>0,\forall k$. As a consequence of Eq.~\eqref{eq:thetai_minus_thetaj}, no pair $\theta_i,\theta_j$ are equivalent modulo $2\pi$: for an interior point, the sum $\sum_{k=1}^{j-1}\gamma_k$ is strictly positive and also strictly smaller than $1$.
\end{proof}

\subsection{More on Eq.~\ref{eq:canon-cell-vert}}
\label{sec:cell-alg-deriv}
In this section, we explain where the points $y_k$ in Eq.~\eqref{eq:canon-cell-vert} come from. In particular, we draw some parallel between our construction of the simplex $\Psi$ and fundamental domains in Bravais lattices. Fig.~\ref{fig:steps_chambers} depicts the steps described below.

Recall that we defined a cell in section~\ref{subsec:sun-flows} as the closure of a connected component in $T$ of the set of regular points. These cells are separated by regions where the eigenvalues $\lambda_i$ are
degenerated. That is, for every point $(\lambda_1,\ldots,\lambda_N)$ in the boundary of a cell, there exists at least one
pair $i, j \in \{ 1, \ldots, N \}$ such that $\theta_i - \theta_j = 0
\mod 2 \pi$, where $\theta_i = \arg \lambda_i$. The inverse image of these boundaries under the exponential map are affine hyperplanes in $\torusalg$ that bound simplexes.
The set of vertices of these simplexes (and all their translated copies) form a Bravais
lattice $\Lambda$ given by $\sum_i z_i u_i$,where $z \in
\mathbb{Z}^{N - 1}$ and the primitive vectors $u_i \in \mathbb{R}^{N}$ are defined by $[u_i]_k = 2 \pi (\frac{1}{N} - \delta_{(i + 1) k})$. In the
theory of lattices, the canonical cell is known as the fundamental simplex and its orbit under the Weyl group is known as the root polytope~\cite{koca2018explicit}.

\begin{figure}[t!]
    \centering
    \includegraphics[width=0.95\linewidth]{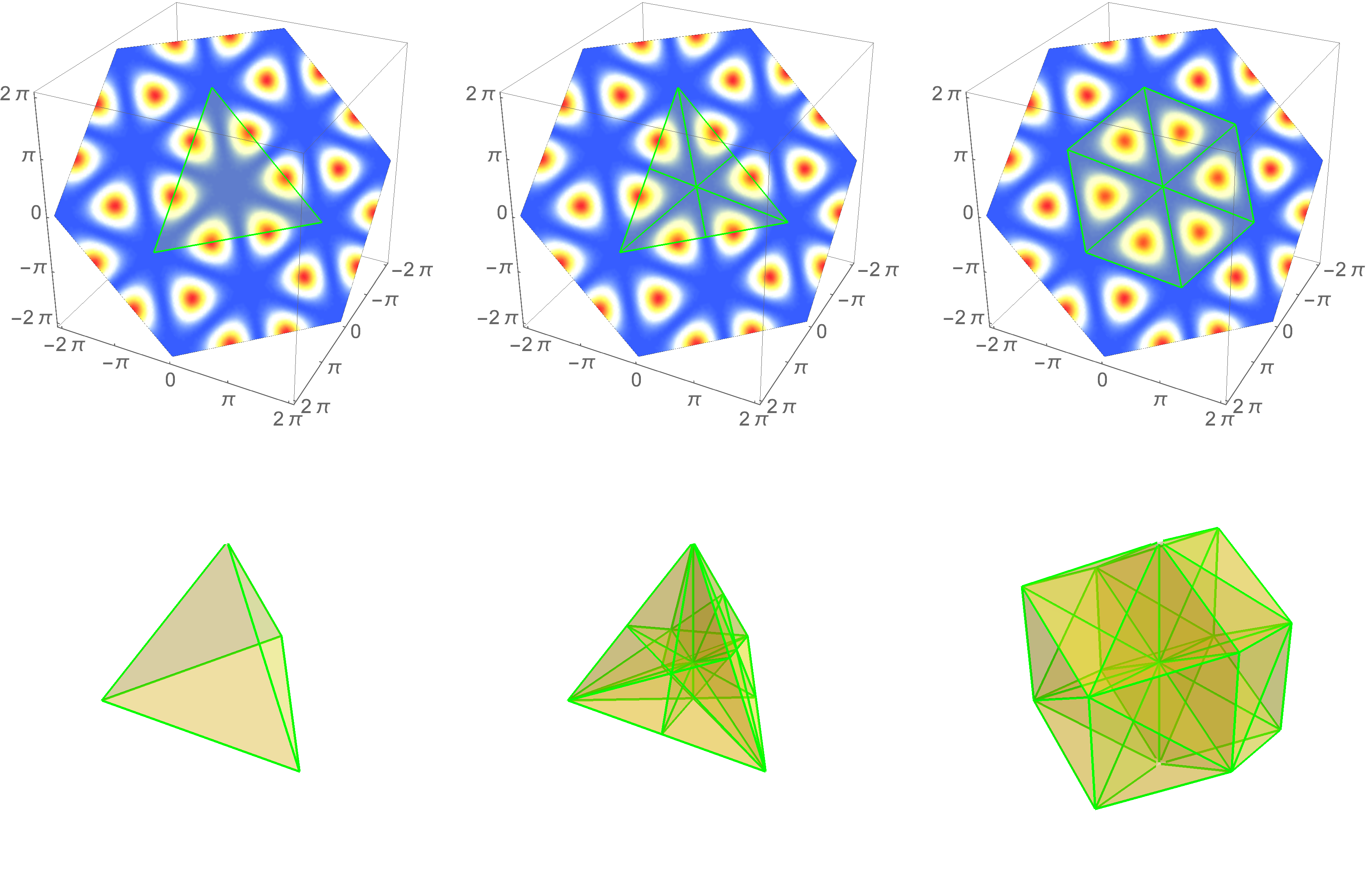}
    \caption{Illustration of the steps to derive Eq.~\eqref{eq:canon-cell-vert} for $\SU{3}$ [top row] and $\SU{4}$ [bottom row]. Columns correspond, from left to right, to the steps: (i) Starting simplex $\Delta$ on the hyperplane $\sum_j \theta_j = 0$; (ii) Barycentric subdivison and (iii) Edge-length adjustment to cover the cell. For $\SU{3}$ we overlay the simplexes on top of the Haar measure for reference.}
    \label{fig:steps_chambers}
\end{figure}

Recall that we identified $\torusalg$ with the hyperplane $\sum_i \theta_i = 0$ in $\mathbb{R}^N$.
Assume $\Delta$ is an $(N - 1)$-simplex with $N$
vertices $q_i \in \torusalg$, defined by the components
\[ [q_i]_j = 2 \pi \left( \frac{1}{N} - \delta_{i j} \right), \]
where $[q_i]_j$ indicates the $j$th component of the $i$th vertex.

Based on the observation that Weyl chambers in the Lie algebra of $\SUn$ are open cones determined by the barycentric subdivision of $\Delta$~\cite{guillemin1996symplecticChap5}, 
we then apply a barycentric transformation to the vertices $q_i$. Generally, a barycentric transformation is achieved by sending $\{v_k\}$ to $\{w_k\}$ by
\begin{equation}
    w_k = \frac{1}{k}\sum_{i=1}^k v_i.
\end{equation}
Applying this to $\Delta$ gives a new simplex $\tilde{\Delta}$ with vertices $x_k$:
\[ [x_k]_j = \frac{2 \pi}{k} \sum_{i = 1}^k \left( \frac{1}{N} -
    \delta_{i j} \right) = \frac{2 \pi}{k} \left( \frac{k}{N} - \delta_{k \geqslant j} \right), \]
where $\delta_{k \geqslant j}$ is equal to one when $k \geqslant j$ and
zero otherwise.

The simplex $\tilde{\Delta}$ is correctly aligned with a cell, however it does not contain the full cell. To fix this, we adjust the length of the edges of
$\tilde{{\Delta}}$, resulting in the final simplex $\mathcal{C}$. Let
$\alpha_k \in \mathbb{R}^+$ be an arbitrary scaling of the $k$th edge of
$\tilde{{\Delta}}$, so that the rescaled coordinates of the vertices are
given by
\[ [y_k]_j = \alpha_k \frac{2 \pi}{k} \left( \frac{k}{N} -
    \delta_{k \geqslant j} \right) . \]
We want the vertices $\{ y_1, \ldots, y_N \}$ of $\mathcal{C}$ to
correspond to neighbouring points of the Bravais lattice $\Lambda$
, so that the
orbit of the simplex $\mathcal{C}$ with respect to the Weyl group tiles
the torus without leaving holes and without overlaps. More precisely, this constraint is
equivalent to imposing $[y_k]_j - [y_k]_i \in \{0, 2\pi\}, \forall j >
i$. Substituting the expression for $[y_k]_{j}$ we obtain the constraint
\[ [y_k]_j - [y_k]_i = \alpha_k \frac{2 \pi}{k} (\delta_{k \geqslant i} -
   \delta_{k \geqslant j}) = 0 \mod 2 \pi, \forall j > i. \]
The term $\delta_{k \geqslant i} - \delta_{k \geqslant j}$ can have values
0 or 1, so that the constraint can be satisfied for all $j > i$ if
$\alpha_k = k$.
With this choice, the coordinates of the vertices of $\mathcal{C}$ are
given by
\begin{align}
    [y_k]_j &= 2 \pi \left( \frac{k}{N} - \delta_{k \geqslant j}
\right).
\end{align}

\subsection{Proof that Algorithm~\ref{alg:canoncell}
projects into $\Psi$}\label{sec:proof_algo1}
In this section, we will show the output of Algorithm~\ref{alg:canoncell} is always a point in $\Psi$.

The output of Algorithm~\ref{alg:canoncell} is a set of angles $\theta^{\rm canon}$. In this section, we will write $\theta^c$ as an abbreviation for $\theta^{\rm canon}$. We wish to prove that $\theta^c$ is in the convex hull of the $y_k$ defined in Eq.~\eqref{eq:canon-cell-vert}. We will do so by explicitly exhibiting the weights of the convex combination. In essence, our proof is the opposite of what lead to Eq.~\eqref{eq:thetai_minus_thetaj}.

Define
\begin{align}
\gamma_k &= \left\{ \begin{array}{cc}
  \frac{1}{2\pi}(\theta^c_{k+1}-\theta^c_k) & k<N,\\
  1-\sum_{j=1}^{N-1}\gamma_j & k=N.
\end{array} \right.
\end{align}
The sum $\sum_{j=1}^{N-1}\gamma_j$ simplifies to $\frac{1}{2\pi}(\theta^c_N-\theta^c_1)$. By construction, the difference $\theta^c_N-\theta^c_1$ cannot be more than $2\pi$. It follows that $\gamma_k\geq 0,\forall k$ and $\sum_k\gamma_k=1$.

Let $\theta'=\sum_k\gamma_k y_k$ be in $\Psi$. We will now prove that $\theta'=\theta^c$, which will conclude the proof. Using the definition of $y_k$ in Eq.~\eqref{eq:canon-cell-vert}, it follows that:
\begin{equation}
    \begin{split}
        \theta'_j & = 2\pi\sum_k\gamma_k \left(\frac{k}{N}-\delta_{k\geq j} \right) \\
        & = 2\pi\left(\sum_{k=1}^{N-1}\frac{1}{2\pi}(\theta^c_{k+1}-\theta^c_k)(\frac{k}{N}-\delta_{k\geq j})\right) + 2\pi\gamma_N [y_N]_j.
    \end{split}
\end{equation}
The extra term after the initial sum above is $0$ because $[y_N]_j=0$. We continue:
\begin{equation}
    \begin{split}
        \theta'_j & = \sum_{k=1}^{N-1}(\theta^c_{k+1}-\theta^c_k)(\frac{k}{N}-\delta_{k\geq j}) \\
        & = \sum_{k=2}^N\theta^c_k(\frac{k-1}{N}-\delta_{k-1\geq j}) - \sum_{k=1}^{N-1}\theta^c_k(\frac{k}{N}-\delta_{k\geq j}) \\
        & = \theta^c_N(\frac{N-1}{N}-\delta_{N-1\geq j}) + \sum_{k=2}^{N-1}\theta^c_k(-\frac{1}{N}+\delta_{k\geq j} - \delta_{k\geq j+1})\\
        &~~~-\theta_1^c(\frac{1}{N}-\delta_{1\geq j})
    \end{split}
\end{equation}
In the last line above, note that we can simplify $\frac{N-1}{N}-\delta_{N\geq 1+j}$ to $\frac{-1}{N} + \delta_{j,N}$, and $\delta_{k\geq j} - \delta_{k\geq j+1}$ to $\delta_{k,j}$, and also $\delta_{1\geq j}$ to $\delta_{1,j}$. It follows that
\begin{equation}
    \begin{split}
        \theta'_j & = \theta^c_N\delta_{N,j} + \sum_{k=2}^N\theta_k^c\delta_{k,j} + \theta^c_1\delta_{1,j} -\frac{1}{N}\sum_{k=1}^N\theta^c_k\\
        & = \theta^c_j
    \end{split}
\end{equation}
The last line above was obtained using that the sum of $\theta_k^c$ is $0$. This concludes our proof.

\subsection{Full algorithm}

\begin{algorithm}[H]
\caption{Equivariant $\SUn$ coupling layer \label{alg:full}}
Given $U \in \SUn$
\begin{enumerate}
\item $\lambda, \{\vec{v}_i\} = \textrm{eigendecomp}(U)$
\item Project to canonical cell $\Psi$: $I = \textrm{canonicalize}(\textrm{arg}(\lambda))$
\item Map to axis-aligned simplex $\Delta$: $\beta = \zeta^{-1}(I)$
\item Map to box $\Omega$: $\alpha = \phi^{-1}(\beta)$
\item Apply box flow: $\alpha' = \chi(\alpha)$
\item $\beta' = \phi(\alpha')$
\item $I' = \zeta(\beta')$
\item $\lambda' = \textrm{uncanonicalize}(I')$
\item $U' = \textrm{eigenrecomp}(\lambda', \{\vec{v}_i\})$
\item Accumulate all log-det-Jacobians:\\
\begin{equation*}
\begin{split}
\textrm{LDJ} = & \log \textrm{Haar}(\lambda') - \log \textrm{Haar}(\lambda) +\\
& \textrm{LDJ}_{\chi} + \textrm{LDJ}_{\phi^{-1}} - \textrm{LDJ}_{\phi}
\end{split}
\end{equation*}
\item $U'$ is equivariant to $\SUn$ matrix conjugations and $\textrm{LDJ}$ is invariant to matrix conjugations.
\item return $U'$ and $\textrm{LDJ}$

\end{enumerate}
\end{algorithm}
Above, there are no terms in $\LDJ$ for the map $\zeta$ because the Jacobian factor acquired from the forward and backward maps are constants that cancel.
The term $\textrm{Haar}(\lambda)$ gives the density of the Haar measure with respect to the Lebesgue measure in the space of eigenvalues, as defined in Eq.~\eqref{eqn:haar-density}. The normalization of this term is unimportant as it cancels in the above algorithm.

\section{Backpropagation through unitary matrix diagonalization} \label{sec:backprop-diag}
We define the backpropagation of gradients through application of a black-box (unitary) diagonalization operation on unitary matrices, i.e.~the steps required to produce a gradient of a scalar loss function $L$ with respect to the matrices, given the gradient of $L$ with respect to their eigenvalues and (unit-norm) eigenvectors. It is assumed that the loss function $L$ is independent of the details of the diagonalization procedure, including the overall complex phase of each eigenvector and the permutation of eigenvalues and eigenvectors; this assumption is true for our spectral flows, for example. A gradient backpropagation algorithm suitable for a black-box diagonalization procedure allows us to implement the diagonalization using any approach that is efficient and numerically stable.

Given the $N \times N$ unitary matrix $U$, we define the eigenvalues and eigenvectors returned by the black-box diagonalization step to be $w = (w_1, \dots, w_N)$ and $P = (\vec{v}_1, \dots, \vec{v}_N)$, respectively. By definition of unitary diagonalization, they satisfy
\begin{equation} \label{eqn:diag}
    U = P D P^\dagger, \quad D := \diag(w).
\end{equation}
Eq.~\eqref{eqn:diag} does not fully constrain $d$ and $P$, so they may further depend on $U$; such dependence (e.g.~how the overall phases on each $\vec{v}_i$ are chosen) is an implementation detail of the diagonalization procedure.
We define the vector of gradients given as input to the backpropagation step to be
\begin{equation}
  g := \left( \frac{\partial L}{\partial \Re w}, \frac{\partial
    L}{\partial \Im w}, \frac{\partial L}{\partial \Re P}, 
  \frac{\partial L}{\partial \Im P} \right),
\end{equation}
where we implicitly bundle the components of the gradients with respect to $w$ and $P$ into one row vector.\footnote{Note that machine learning libraries with support for complex numbers may provide such gradients in different formats. For example, the convention used by \texttt{JAX}~\cite{jax2018github} is to provide the complex-valued gradient vector
\begin{equation}
  g^{\texttt{jax}} = \left( \frac{\partial L}{\partial \Re w} - i \frac{\partial
    L}{\partial \Im w}, \frac{\partial L}{\partial \Re P} - i
  \frac{\partial L}{\partial \Im P} \right),
\end{equation}
which packs the gradient components into complex values, matching the type of $w$ and $P$~\cite{maclaurin2016autograd}.
}

To proceed, we use Eq.~\eqref{eqn:diag} to relate the differential elements $dP$ and $dw$ to $dU$, and ultimately solve for the Jacobian elements $\partial \Re w / \partial \Re U$, $\partial \Im w / \partial \Re U$, \dots, $\partial \Im P / \partial \Im U$. Ambiguity due to the implementation details of the diagonalization procedure correspond to ambiguities in components of the Jacobian that cannot affect $L$ by our assumption above. Therefore, in what follows we simply make a valid choice.

From the unitarity of $P$ and diagonal nature of $D$, we know
\begin{equation} \label{eqn:Puni}
\begin{gathered}
P dP^\dagger + dP P^\dagger = 0, \quad
dP^\dagger P + P^\dagger dP = 0, \\
dw_i = dD_{ii}, \quad \text{and} \quad d D_{ij} = 0, \forall i \neq j.
\end{gathered}
\end{equation}
From Eq.~\eqref{eqn:diag}, we can derive
\begin{equation}
\begin{aligned}
U = P D P^\dagger \implies P^\dagger U P &= D \\
dP^\dagger U P + P^\dagger dU P + P^\dagger U dP &= dD \\
dP^\dagger P D + P^\dagger dU P + D P^\dagger dP &= dD.
\end{aligned}
\end{equation}
Introducing
\begin{equation}
dH := P^\dagger dP = -dP^\dagger P,
\end{equation}
which represents the differential of $P$ translated to the identity, we simplify the relation of differential elements to
\begin{equation}
P^\dagger dU P - dD = -D dH + dH D = [dH, D].
\end{equation}
However, since $D$ is diagonal we know $[dH, D]_{ii} = 0$ and therefore have an explicit expression for $dw_i$
\begin{equation}
    dw_i = dD_{ii} = [ P^\dagger dU P ]_{ii}.
\end{equation}
We can similarly compute the off-diagonal components of $dH$,
\begin{equation}
\begin{aligned}
    [P^\dagger dU P]_{ij} &= (w_j - w_i) dH_{ij}, \forall i \neq j \\
    dH_{ij} &= \frac{[P^\dagger dU P]_{ij}}{w_j - w_i}.
\end{aligned}
\end{equation}
The imaginary components of the diagonal elements of $dH$ are unconstrained ($dH$ is anti-Hermitian so the real components are zero), reflecting the fact that the only undefined (continuous) degrees of freedom are the phases on the eigenvectors. We are free to set them to zero, giving a \emph{valid} choice of $dP$,
\begin{equation}
\begin{gathered}
    dP_{mn} = [P dH]_{mn} = P_{mi} P^\dagger_{ij} dU_{jk} P_{kn} V_{in}, \quad \text{where} \\
    V_{in} = \begin{cases}
        \frac{1}{w_n - w_i} & i \neq n \\
        0 & \text{else}.
    \end{cases}
\end{gathered}
\end{equation}

Having defined a valid solution of the differentials $dP$ and $dw$ in terms of $dU$, we can solve for all the components of the Jacobian:
\begin{equation}
\begin{aligned}
    \frac{\partial \Re w_i}{\partial \Re U_{jk}} &= \Re( P^*_{ji} P_{ki} ) \\
    \frac{\partial \Im w_i}{\partial \Re U_{jk}} &= \Im( P^*_{ji} P_{ki} ) \\
    \frac{\partial \Re w_i}{\partial \Im U_{jk}} &= -\Im( P^*_{ji} P_{ki} ) \\
    \frac{\partial \Im w_i}{\partial \Im U_{jk}} &= \Re( P^*_{ji} P_{ki} ) \\
    \frac{\partial \Re P_{mn}}{\partial \Re U_{jk}} &= \Re( P_{mi} P^*_{ji} P_{kn} V_{in} ) \\
    \frac{\partial \Im P_{mn}}{\partial \Re U_{jk}} &= \Im( P_{mi} P^*_{ji} P_{kn} V_{in} ) \\
    \frac{\partial \Re P_{mn}}{\partial \Im U_{jk}} &= -\Im( P_{mi} P^*_{ji} P_{kn} V_{in} ) \\
    \frac{\partial \Im P_{mn}}{\partial \Im U_{jk}} &= \Re( P_{mi} P^*_{ji} P_{kn} V_{in} ). \\
\end{aligned}
\end{equation}
Together these components form the Jacobian matrix
\begin{equation}
    J = \left( \begin{matrix}
    \dfrac{\partial \Re w}{\partial \Re U} & \dfrac{\partial \Re w}{\partial \Im U} \\[1em]
    \dfrac{\partial \Im w}{\partial \Re U} & \dfrac{\partial \Im w}{\partial \Im U} \\[1em]
    \dfrac{\partial \Re P}{\partial \Re U} & \dfrac{\partial \Re P}{\partial \Re U} \\[1em]
    \dfrac{\partial \Im P}{\partial \Re U} & \dfrac{\partial \Im P}{\partial \Re U}
    \end{matrix} \right),
\end{equation}
which allows us to backpropagate the gradients by right-multiplication,
\begin{equation}
    g' := \left( \frac{\partial L}{\partial \Re{U}}, \frac{\partial L}{\partial \Im{U}} \right) = g J.
\end{equation}

\section{Conjugation equivariant maps on $\SU{3}$ via averaging} \label{sec:su3-equiv-by-averaging}
We are interested in building diffeomorphisms of $\SU{3}$ that are equivariant under the action by conjugation of $\SU{3}$ on itself. We already know from Appendix~\ref{sec:equiv-conj-perm} that it is enough to build diffeomorphisms of $T$ that are equivariant under the action of its Weyl group. Our goal here is to tackle that problem by lifting it to the Lie algebra $\torusalg$ of $T$, where we will \textit{average} diffeomorphisms of $\torusalg$ to force the equivariance. We will then identify certain sufficient properties that guarantee this still leads to diffeomorphisms of $\torusalg$ that descend to diffeomorphisms of $T$.

In this section, we identify the Lie algebra of $T$ with $\mathbb{R}^2$ via the map
\begin{equation}
    (x, y)\overset{\exp}{\longmapsto} \textrm{Diag}(e^{2\pi i x}, e^{2\pi i y}, e^{-2\pi i (x+y)}).
\end{equation}
Given a map $\tilde{h}:\torusalg\rightarrow\torusalg$, we say that this map descends to a map on $T$ if there exists $h$ such that the following diagram is commutative
\begin{equation}
    \begin{tikzcd}
        \torusalg \arrow{r}{\tilde{h}} \arrow[swap]{d}{\exp} & \torusalg \arrow{d}{\exp} \\
        T \arrow{r}{h}& T
    \end{tikzcd}
\end{equation}
If $\tilde{h}$ is equivariant with respect to the action of the Weyl group, then so is $h$ since $exp$ is equivariant and surjective.
Given $(x, y)\in \mathbb{R}^2$, we define $z=-x-y$.

The Weyl group $\mathcal{W}$ associated with $T$ is the group of permutations over $3$ elements. This group acts on both $T$ and its Lie algebra $\torusalg$, and the exponential map is equivariant under these actions.
Denote $\sigma_0,\ldots,\sigma_5$ the elements of $\mathcal{W}$.

A map $\tilde{h}:\torusalg\rightarrow\torusalg$ descends to a map $h:T\rightarrow T$ iff
\begin{equation}\label{eq:descends_to_torus}
    \forall x, y\in\torusalg, \forall a,b\in\mathbb{Z}^2, \tilde{h}(x+a, y+b) - \tilde{h}(x, y)\in \mathbb{Z}^2.
\end{equation}
Also, $\tilde{h}:\torusalg\rightarrow\torusalg$ if a local diffeomorphism iff its descended map $T\rightarrow T$ is a local diffeomorphism.

\begin{prop}
Let $\tilde{h}:\torusalg\rightarrow\torusalg$ be any map that satisfies \Cref{eq:descends_to_torus}, then
\begin{equation}\label{eq:average_diff_su3}
    G_{\tilde{h}} = \frac{1}{6}\sum_k \sigma_k^{-1}\tilde{h}\sigma_k
\end{equation}
also satisfies \Cref{eq:descends_to_torus}, and is equivariant under the action of the Weyl group.
\end{prop}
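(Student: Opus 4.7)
My plan is to prove the two claims separately: (a) that $G_{\tilde h}$ is $\mathcal{W}$-equivariant, and (b) that $G_{\tilde h}$ satisfies the descending condition, namely $G_{\tilde h}(x+a, y+b) - G_{\tilde h}(x, y) \in \mathbb{Z}^2$ for every $(a,b) \in \mathbb{Z}^2$. Both parts hinge on the elementary observation that the action of $\mathcal{W}$ on $\torusalg \cong \mathbb{R}^2$, which is induced by permuting the three coordinates $x$, $y$, $-x-y$, preserves the integer lattice $\mathbb{Z}^2$: a permutation of an integer triple is still an integer triple.

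For (a), I would use the standard Reynolds averaging argument. For any $\sigma_j \in \mathcal{W}$ and any $p \in \torusalg$,
\[
G_{\tilde h}(\sigma_j p) = \frac{1}{6}\sum_k \sigma_k^{-1}\tilde h\bigl((\sigma_k \sigma_j)\,p\bigr) = \sigma_j \cdot \frac{1}{6}\sum_k (\sigma_k\sigma_j)^{-1}\tilde h\bigl((\sigma_k\sigma_j)\,p\bigr).
\]
Reindexing with $\sigma_m := \sigma_k\sigma_j$ (which is a bijection of $\mathcal{W}$ as $k$ varies) collapses the inner sum to $G_{\tilde h}(p)$, giving the desired $G_{\tilde h}(\sigma_j p) = \sigma_j\, G_{\tilde h}(p)$.

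For (b), I would expand each summand as
\[
\sigma_k^{-1}\bigl[\tilde h(\sigma_k(x,y)+\sigma_k(a,b)) - \tilde h(\sigma_k(x,y))\bigr].
\]
Because $\sigma_k$ preserves $\mathbb{Z}^2$, the shift $\sigma_k(a,b)$ is again in $\mathbb{Z}^2$, so the bracket lies in $\mathbb{Z}^2$ by the descending hypothesis on $\tilde h$; applying $\sigma_k^{-1}$ keeps it in $\mathbb{Z}^2$ by the same lattice-preservation fact. Thus each of the six summands is integer-valued.

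The main obstacle is then the factor $1/6$: six integer vectors do not a priori average to an integer vector. To close this, I would observe that for continuous $\tilde h$ the difference $\tilde h(p+n)-\tilde h(p)$ is independent of $p$ (by a connectedness argument on the simply connected $\torusalg$) and additive in $n$, hence defines a group homomorphism $w\colon \mathbb{Z}^2 \to \mathbb{Z}^2$. The shift of $G_{\tilde h}$ then equals $\tfrac{1}{6}\sum_k \sigma_k^{-1} w(\sigma_k(a,b))$. In the canonical setting that the paper is building toward, in which $\tilde h$ is the identity plus a $\mathbb{Z}^2$-periodic correction, $w$ is the identity map, every summand contributes $(a,b)$, and the sum cleanly gives $6(a,b)$, so dividing by six returns $(a,b)\in \mathbb{Z}^2$. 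Pinning down the minimal hypothesis on $\tilde h$ that forces this integral cancellation is the step I expect to require the most care, and I anticipate that it is precisely this hypothesis that the paper's subsequent ``sufficient properties'' are designed to supply.
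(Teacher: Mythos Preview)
Your equivariance argument in (a) is the same Reynolds-operator computation the paper gives. The paper adds one remark you leave implicit: each $\sigma_j$ acts on $\torusalg\cong\mathbb{R}^2$ linearly (hence affinely, preserving barycenters), and this is what licenses pulling $\sigma_j$ outside the average $\tfrac16\sum_k(\cdots)$.

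For (b) you have been more careful than the paper. The paper's entire argument for the descending claim is the assertion that the class of maps satisfying \Cref{eq:descends_to_torus} is closed under composition and under convex combination. Closure under composition is correct, but closure under convex combination is \emph{false}: the identity map and the zero map both satisfy \Cref{eq:descends_to_torus}, yet their average $\tfrac12\,\mathrm{id}$ does not. Your monodromy analysis in fact shows that the Proposition, read literally for ``any map $\tilde h$ satisfying \Cref{eq:descends_to_torus}'', cannot hold: take the continuous $\tilde h(x,y)=(x,0)$, so that $w=\bigl(\begin{smallmatrix}1&0\\0&0\end{smallmatrix}\bigr)$; since the $\mathcal{W}$-representation on $\torusalg$ is the irreducible standard representation of $S_3$, Schur's lemma gives $\tfrac16\sum_k\sigma_k^{-1}w\,\sigma_k=\tfrac{\mathrm{tr}\,w}{2}\,I=\tfrac12 I$, whence $G_{\tilde h}(p+n)-G_{\tilde h}(p)=\tfrac12 n\notin\mathbb{Z}^2$.

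So your instinct that an extra hypothesis is needed is correct, and your candidate --- $\tilde h-\mathrm{id}$ is $\mathbb{Z}^2$-periodic, i.e.\ $w=\mathrm{id}$, so that each of the six summands $\sigma_k^{-1}w\,\sigma_k(a,b)$ equals $(a,b)$ and the factor $1/6$ is harmless --- is exactly the right fix. The paper does not supply this hypothesis in the Proposition or its proof; it only later restricts to $\tilde h(x,y)=(f(x),f(y))$ with $f$ a lift of a circle diffeomorphism, and that special form has $w$ a scalar multiple of the identity, where your argument goes through.
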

\begin{proof}
The set of maps that satisfy \Cref{eq:descends_to_torus} is stable under convex combination and composition. It follows that $G_{\tilde{h}}$ satisfies \Cref{eq:descends_to_torus}.

Let's check it is equivariant. Let $\sigma_i$ be in $\mathcal{W}$. In particular, $\sigma_i$ is an affine map, and it will preserve barycenters, so that
\begin{equation}
    \begin{split}
        \sigma_i\circ G_{\tilde{h}} & = \frac{1}{6}\sum_k \sigma_i\sigma_k^{-1}\tilde{h}\sigma_k \\
        & = \frac{1}{6}\sum_k (\sigma_k\sigma_i^{-1})^{-1}\tilde{h}\sigma_k \\
        & = \frac{1}{6}\sum_k \sigma_k^{-1}\tilde{h}\sigma_k\sigma_i \\
        & = G_{\tilde{h}}\circ \sigma_i
    \end{split}
\end{equation}
\end{proof}

Note that if $\tilde{h}$ is the identity map, then $G_{\tilde{h}}$ is also the identity map. If we start with a $\tilde{h}$ that is close to the identity, we will have constructed an equivariant diffeomorphism of $\torusalg$, that descends to an equivariant diffeomorphism of $T$.

In order to ensure that $G_{\tilde{h}}$ is a diffeomorphism, we will now restrict $\tilde{h}$ to a particular form. Namely, assume $f:\mathbb{R}\rightarrow\mathbb{R}$ is a diffeomorphism obtained by lifting a diffeomorphism of $S^1$ to its Lie algebra $\mathbb{R}$, and
define $\tilde{h}(x, y)=(f(x), f(y))$. This is a diffeomorphism of $\torusalg$, and $G_{\tilde{h}}$ is equivariant and descends to an equivariant map $G_h:T\rightarrow T$. We would like to find sufficient conditions for $G_{\tilde{h}}$ to descend to a diffeomorphism of $T$.

Let's start by computing the Jacobian of $G_{\tilde{h}}$.
\begin{prop}\label{prop:jacobian}
The Jacobian $J(G)$ of $G_{\tilde{h}}$ is given by
\begin{equation}
    J(G)(x, y) = \frac{1}{3}\begin{bmatrix}
2f'(x) + f'(z) & f'(z) - f'(y)\\
f'(z) - f'(x) & 2f'(y) + f'(z)
\end{bmatrix}	
\end{equation}
\end{prop}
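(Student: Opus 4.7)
The plan is a direct computation broken into three stages: write out the six conjugates $\sigma_k^{-1}\tilde h\sigma_k$ in $(x,y)$ coordinates, sum them, and differentiate. The only nontrivial bookkeeping is translating the Weyl group action (on triples $(x_1,x_2,x_3)$ with $x_1+x_2+x_3=0$) into an action on the plane coordinates $(x,y)$ using $z:=-x-y$, and then pushing $\tilde h$ through while remembering that the third component after $\tilde h$ is $-f(x)-f(y)$, \emph{not} $f(-x-y)$.

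First I would enumerate the six elements of $\mathcal{W}\cong S_3$ and record each $\sigma_k\cdot(x,y)$, e.g.\ $(12):(x,y)\mapsto(y,x)$, $(13):(x,y)\mapsto(z,y)$, $(23):(x,y)\mapsto(x,z)$, $(123):(x,y)\mapsto(y,z)$, $(132):(x,y)\mapsto(z,x)$. A useful shortcut to avoid doing all six cases by hand is to observe that the transposition $(12)$ commutes with $\tilde h$, since applying $\tilde h$ after extending to the third coordinate gives $(x,y,z)\mapsto(f(x),f(y),-f(x)-f(y))$, which is manifestly invariant under swapping the first two slots. Hence $\sigma^{-1}\tilde h\sigma=((12)\sigma)^{-1}\tilde h((12)\sigma)$, and the six terms pair up into three equal pairs: $\{e,(12)\}$, $\{(13),(123)\}$, $\{(23),(132)\}$.

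Computing one representative from each pair then gives
\begin{equation*}
6\,G_{\tilde h}(x,y)=2\bigl(f(x),f(y)\bigr)+2\bigl(-f(y)-f(z),\,f(y)\bigr)+2\bigl(f(x),\,-f(x)-f(z)\bigr),
\end{equation*}
which collapses to
\begin{equation*}
G_{\tilde h}(x,y)=\tfrac{1}{3}\bigl(2f(x)-f(y)-f(z),\;2f(y)-f(x)-f(z)\bigr).
\end{equation*}

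Finally I would differentiate using $\partial z/\partial x=\partial z/\partial y=-1$: the chain rule turns the $-f(z)$ terms into $+f'(z)$ contributions in every partial derivative, producing exactly the stated matrix. The only place to slip is forgetting this sign from $z=-x-y$, which is also the sole step where one might expect a subtle obstacle; once that is handled, the entries $\tfrac{1}{3}(2f'(x)+f'(z))$ on the diagonal and $\tfrac{1}{3}(f'(z)-f'(y))$, $\tfrac{1}{3}(f'(z)-f'(x))$ off the diagonal drop out directly.
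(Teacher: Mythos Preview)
Your computation is correct and follows exactly the direct-computation route the paper indicates; the paper's own proof is simply the one-line remark that this is a tedious but straightforward calculation using the Jacobians of the Weyl group elements. Your observation that $(12)$ commutes with $\tilde h$, so that the six conjugates collapse to three distinct terms, is a clean organizational shortcut within that same approach.
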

\begin{proof}
This is a direct, albeit a bit tedious, computation using the Jacobians of the elements of the Weyl group.
\end{proof}

\begin{coro}\label{cor:g_h_diff}
The map $G_{\tilde{h}}$ is a diffeomorphism of $\torusalg$.
\end{coro}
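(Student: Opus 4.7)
The plan is to combine a local invertibility statement with a global properness argument. First I would read off from Proposition~\ref{prop:jacobian} that
\begin{equation*}
\det J(G)(x,y) = \tfrac{1}{3}\bigl(f'(x)f'(y) + f'(y)f'(z) + f'(x)f'(z)\bigr),
\end{equation*}
with $z = -x-y$. Since $f$ is a diffeomorphism of $\mathbb{R}$, $f'$ has constant nonzero sign, so each of the three pairwise products is strictly positive and $\det J(G) > 0$ on all of $\torusalg$. The inverse function theorem then makes $G_{\tilde h}$ a local diffeomorphism.

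Next I would upgrade this to a global diffeomorphism by showing that $G_{\tilde h}$ is proper. Because $f$ lifts a diffeomorphism of $S^1 = \mathbb{R}/\mathbb{Z}$, it satisfies $f(x+1) = f(x) \pm 1$, so in the orientation-preserving case $\phi := f - \mathrm{id}$ is bounded and $1$-periodic; the orientation-reversing case is analogous with $\phi := f + \mathrm{id}$. Substituting $\tilde h(a,b) = (a,b) + (\phi(a),\phi(b))$ into the averaging formula and using the linearity of the $\sigma_k$,
\begin{equation*}
G_{\tilde h}(x,y) = (x,y) + \tfrac{1}{6}\sum_k \sigma_k^{-1}\bigl(\phi(A_k),\phi(B_k)\bigr),
\end{equation*}
where $(A_k,B_k) := \sigma_k(x,y)$. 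The remainder is bounded uniformly in $(x,y)$, so $\|G_{\tilde h}(x,y)\| \to \infty$ as $\|(x,y)\| \to \infty$, and preimages of compact sets are compact.

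Finally, I would invoke the fact that a proper local diffeomorphism from a connected manifold onto a simply connected manifold of the same dimension is a diffeomorphism: it is automatically a covering map of $\mathbb{R}^2$, and the only connected cover of $\mathbb{R}^2$ is itself. Applied to $G_{\tilde h}:\torusalg \to \torusalg$ with $\torusalg \cong \mathbb{R}^2$, this yields the claim.

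The main obstacle I expect is the properness step rather than the determinant computation. One must check that averaging over the Weyl group really does leave a bounded perturbation of $\mathrm{id}$ (or of $-\mathrm{id}$) rather than of some degenerate linear map whose kernel could prevent escape to infinity; the clean reason this works is that $\sigma_k^{-1}\sigma_k = \mathrm{id}$ for every $k$, so the linear piece of $f$ contributes the identity to each summand independently of the permutation, while only the periodic remainder $\phi$ is rearranged across summands.
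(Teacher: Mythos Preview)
Your approach is essentially the same as the paper's: compute the Jacobian determinant from Proposition~\ref{prop:jacobian}, observe it equals $\tfrac{1}{3}\bigl(f'(x)f'(y)+f'(y)f'(z)+f'(z)f'(x)\bigr)$ and is therefore nonvanishing because $f'$ has constant sign, conclude that $G_{\tilde h}$ is a local diffeomorphism, and then upgrade to a global diffeomorphism of $\torusalg\cong\mathbb{R}^2$ via simple connectedness.

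The one genuine difference is that you make the passage from ``local diffeomorphism'' to ``covering map'' explicit through a properness argument, whereas the paper's proof simply writes ``Since $\torusalg$ is simply connected, this means $G_{\tilde h}$ is indeed a diffeomorphism'' and leaves that step to the reader. Your observation that $G_{\tilde h}=\pm\mathrm{id}+\text{(bounded)}$, obtained by writing $f=\pm\mathrm{id}+\phi$ with $\phi$ periodic and using that each $\sigma_k$ acts linearly so that $\sigma_k^{-1}(\pm\sigma_k)=\pm\mathrm{id}$ in every summand, is exactly the missing ingredient: it is what guarantees the local diffeomorphism is proper, hence a covering of $\mathbb{R}^2$, hence a diffeomorphism. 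So your proof is correct and, on this point, more complete than the paper's.
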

\begin{proof}
Let's start by checking that $G_{\tilde{h}}$ is a local diffeomorphism. We only need to check that its Jacobian is always invertible. The determinant of the Jacobian in \Cref{prop:jacobian} simplifies nicely to $\frac{1}{3}(f'(x)f'(y) + f'(y)f'(z) + f'(z)f'(x))$. Since we assumed that $f$ comes from a diffeomorphism of $S^1$, its derivative is either always strictly positive, or always strictly negative, and the determinant cannot vanish.

Since $\torusalg$ is simply connected, this means $G_{\tilde{h}}$ is indeed a diffeomorphism of $\torusalg$ that satisfies \Cref{eq:descends_to_torus}.
\end{proof}

\begin{coro}
If $f$ is connected to the identity by a path of diffeomorphisms, then $G_{\tilde{h}}$ descends to a diffeomorphism $G_h$ of $T$ that is equivariant under the Weyl group.
\end{coro}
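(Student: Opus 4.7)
The plan is to leverage the existing results in this section (the preceding proposition together with Corollary~\ref{cor:g_h_diff}), which already give that $G_{\tilde{h}}$ is a Weyl-equivariant diffeomorphism of $\torusalg$ satisfying the descent condition \Cref{eq:descends_to_torus}. Consequently $G_{\tilde{h}}$ descends to a Weyl-equivariant smooth map $G_h: T \to T$, and since both $G_{\tilde{h}}$ and $\exp$ are local diffeomorphisms, so is $G_h$. The only thing left to establish is that $G_h$ is also globally bijective.

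To obtain this, I will use the hypothesis to pick a path $\{f_t\}_{t\in[0,1]}$ of diffeomorphisms of $S^1$ (lifted to $\mathbb{R}$) with $f_0=\mathrm{id}$ and $f_1=f$. Setting $\tilde{h}_t(x,y)=(f_t(x),f_t(y))$ and applying Corollary~\ref{cor:g_h_diff} pointwise in $t$ yields a continuous family of Weyl-equivariant diffeomorphisms $G_{\tilde{h}_t}$ of $\torusalg$, each satisfying the descent condition, and hence a continuous path of local diffeomorphisms $G_{h_t}: T \to T$ joining $\mathrm{id}_T$ at $t=0$ to $G_h$ at $t=1$.

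Because $T$ is compact, connected, and orientable, each $G_{h_t}$ is a finite-sheeted covering map with a well-defined integer mapping degree, and the degree is a homotopy invariant. Since $\deg(G_{h_0})=\deg(\mathrm{id}_T)=1$, we conclude $\deg(G_h)=1$. A degree-one local diffeomorphism of a connected compact manifold is necessarily a diffeomorphism: it is a one-sheeted covering, so a bijection, and its inverse is smooth because the forward map is a local diffeomorphism. Weyl equivariance of $G_h$ is immediate from that of $G_{\tilde{h}}$.

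The main obstacle I expect is checking that the chosen path $\{f_t\}$ remains inside the class of admissible diffeomorphisms throughout, so that the derivative $f_t'$ never vanishes and the Jacobian determinant computed in Proposition~\ref{prop:jacobian} stays nonzero at every $t$, and that each $f_t$ continues to descend to a diffeomorphism of $S^1$. Both properties, however, are guaranteed directly by the assumption that the path lies in the group of diffeomorphisms of $\mathbb{R}$ lifted from $\mathrm{Diff}(S^1)$, so the argument should go through without further complications.
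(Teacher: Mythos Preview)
Your proposal is correct and follows essentially the same strategy as the paper: show that $G_h$ is a local diffeomorphism of the compact torus $T$ (hence a covering map), then use the homotopy to the identity coming from the path $\{f_t\}$ to conclude the covering is trivial. The only cosmetic difference is the invariant used to detect triviality of the covering---you invoke the mapping degree, while the paper argues that $G_h$ induces the identity on $\pi_1(T)$; both arguments are standard and equivalent here.
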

\begin{proof}
We already know that $G_{\tilde{h}}$ descends to a local diffeomorphism of $T$. This is necessarily a covering of $T$ by itself. We only need to prove this covering is trivial. We cannot use the same argument with $T$ as we did with $\torusalg$, because $T$ is not simply connected.

We have assumed that $f$ is homotopic to the identity of $\mathbb{R}$. This immediately gives us a homotopy from $G_{\tilde{h}}$ to the identity of $\torusalg$. This descends to a map $k:[0,1]\times T\rightarrow T$. Using local coordinates, we can see that this map is continuous. It therefore defines a homotopy from $G_h$ to the identity of $T$. In particular, we conclude that $G_h$ must induce the identity map on the fundamental group, and is necessarily a trivial covering.

\end{proof}

\begin{coro}\label{cor:su3_diff}
Any circle diffeomorphism from Ref.~\cite{papamakarios2019normalizing}, such as a mixture of NCPs, M\"obius, or a spline, can be used to define an equivariant diffeomorphism of $\SU{3}$.
\end{coro}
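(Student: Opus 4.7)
The plan is to chain together the structural results already established in the paper: (i) the averaging construction of the previous corollary produces a Weyl-equivariant diffeomorphism of $T$ whenever it is fed an orientation-preserving circle diffeomorphism that is homotopic to the identity; and (ii) Proposition~\ref{prop:F_well_defined} in Appendix~\ref{sec:equiv-conj-perm} lifts any Weyl-equivariant diffeomorphism of $T$ to a matrix-conjugation-equivariant diffeomorphism of $\SU{3}$. So what remains is genuinely just to check the hypothesis of the preceding corollary for each of the three named families.

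First I would fix notation: given a circle diffeomorphism $\phi:S^1\rightarrow S^1$ from~\cite{papamakarios2019normalizing}, lift it to a diffeomorphism $f:\mathbb{R}\rightarrow\mathbb{R}$ of its universal cover, then form $\tilde{h}(x,y)=(f(x),f(y))$ as in the construction preceding Corollary~\ref{cor:g_h_diff}. By the previous corollary it suffices to exhibit, for each admissible $f$, a continuous path $f_t$ of circle diffeomorphisms with $f_0=\mathrm{id}$ and $f_1=f$. Feeding this path into the averaging formula~\eqref{eq:average_diff_su3} yields a continuous path of diffeomorphisms $G_{\tilde{h}_t}$ of $\torusalg$ starting at the identity, which descends to a path of Weyl-equivariant diffeomorphisms of $T$ starting at the identity, and hence to a matrix-conjugation-equivariant diffeomorphism of $\SU{3}$ by the extension result in Appendix~\ref{sec:equiv-conj-perm}.

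Then I would verify the homotopy condition for each family individually. For M\"obius transformations of $S^1$, the parameter domain is a ball (the complex parameter satisfies $|a|<1$), and the straight-line path $a_t=(1-t)\cdot 0 + t\cdot a$ gives a path from the identity to the chosen map through orientation-preserving diffeomorphisms. For rational-quadratic or cubic splines with a fixed knot grid, interpolating linearly between the prescribed output knots and the diagonal configuration (where output knots equal input knots) provides a path of monotone splines, hence of orientation-preserving circle diffeomorphisms. For a mixture of non-compact projections (NCPs), simultaneously shrinking all scale parameters to zero while sending the mixture to its identity component gives the required homotopy; monotonicity is preserved along the path because each NCP component is monotone for every parameter value in its admissible range.

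The main obstacle I anticipate is not any single one of these verifications, but rather being careful that the homotopy remains inside the class of \emph{orientation-preserving} circle diffeomorphisms at every time $t$, since the proof of Corollary~\ref{cor:g_h_diff} uses that $f'$ has constant sign to conclude that the Jacobian determinant of $G_{\tilde{h}}$ never vanishes. Once this monotonicity is preserved along each homotopy, the chain of consequences (Corollary~\ref{cor:g_h_diff} gives the $\torusalg$-level diffeomorphism, the preceding corollary gives descent to $T$ and Weyl-equivariance, and Proposition~\ref{prop:F_well_defined} extends from $T$ to $\SU{3}$) runs automatically and yields the claim.
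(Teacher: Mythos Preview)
Your proposal is correct and follows the same line as the paper. The paper in fact gives no explicit proof for this corollary; it is stated as an immediate consequence of the preceding corollary (which yields a Weyl-equivariant diffeomorphism of $T$ once $f$ is connected to the identity through circle diffeomorphisms) together with the extension result of Proposition~\ref{prop:F_well_defined}. Your write-up simply fills in the verification that each named family from~\cite{papamakarios2019normalizing} is homotopic to the identity through orientation-preserving circle diffeomorphisms, which the paper leaves implicit.
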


We tested flows based on the equivariant diffeomorphisms suggested by Corollary~\ref{cor:su3_diff} but found that networks built this way did not perform as well as those used in the main body of the paper. This is likely because using a single circle diffeomorphism in Equation~\eqref{eq:average_diff_su3} is too restrictive. An alternative would be to build a diffeomorphism of the torus from two circle diffeomorphisms by auto-regressivity. In that case, Corollary~\ref{cor:g_h_diff} does not apply and one needs to be careful that averaging still leads to a diffeomorphism.

\section{The case of $\Un$}\label{sec:case_of_un}
The case of $\Un$ is simpler than $\SUn$ because we do not have the constraint that the determinant must be equal to $1$. We could apply the same strategy used for the $\SUn$ flows via a canonicalization map to map every point to a canonical cell, and then build a flow in the $N$-simplex cell (in contrast to the $(N-1)$-simplex cell for $\SUn$). An alternative and simpler direction is to directly build a permutation equivariant flow on the torus $T^N$. This can be achieved by first mapping $T^N$ to $\mathbb{R}^N$ using a non-compact projection~\cite{gemici2016normalizing,papamakarios2019normalizing}, then stacking layers alternating between those defined by Eq.~(13) and Eq.~(15) or Eq.~(16) in Ref.~\cite{bender2019exchangeable}, before finally projecting back to $T^N$.
We tested this flow on $\U{3}$ using the target action given in Eq.~\eqref{eq:target_action} with coefficients $c^{(0)}$ from Table~\ref{tab:toy-params} and $\beta=1, 5, 9$. The flow quickly converged with ESS of more than $95\%$ in each case.

\end{document}